\documentclass[10pt, journal, letter, twocolumn]{IEEEtran}

\usepackage{pgfplots}

\pgfplotsset{compat=1.10}

\usepgfplotslibrary{fillbetween}

\usepackage{lipsum}
\usepackage{makeidx}
\usepackage{enumerate}
\usepackage{color}
\usepackage{cite}
\usepackage{amsmath,amsthm}   
\usepackage{amssymb}
\usepackage{nomencl}
\usepackage{multirow}
\usepackage{graphicx}
\usepackage{multirow}
\usepackage{anysize}
\usepackage{float}
\usepackage{epstopdf}
\usepackage{threeparttable}
\usepackage{multicol}
\DeclareGraphicsExtensions{.pdf,.jpeg,.png,.jpg,.emf,.eps}
\hyphenation{}   
\usepackage{geometry}

\usepackage{calc}
\usepackage{here}
\usepackage{url}
\usepackage{upref}
\usepackage{comment}
\usepackage{times}
\usepackage{dsfont}
\usepackage{epic,eepic}
\usepackage{rawfonts}
\usepackage[T1]{fontenc}
\usepackage{latexsym}
\usepackage{amsfonts}

\usepackage{capitalgreekitalic}

\hyphenation{}   
\usepackage{geometry}
\geometry{left=1.69 cm, right=1.69 cm, bottom=1.69 cm, top=1.69 cm}
\usepackage[font=small,labelfont=bf]{caption}
\usepackage[font=small,labelfont=bf]{subcaption}

\usepackage{tikz,pgfplots}
\usetikzlibrary{calc}
\usetikzlibrary{intersections}

\newtheorem{theorem}{Theorem}

\newtheorem{lemma}{Lemma}

\theoremstyle{definition}
\newtheorem{definition}{Definition}

\begin{document}

\title{Improper Signaling for SISO Two-user Interference Channels with Additive Asymmetric Hardware Distortion}

\author{Mohammad Soleymani$^*$, \emph{Student Member, IEEE}, Christian Lameiro$^*$, \emph{Member, IEEE}, Ignacio Santamaria$^\dag$ \emph{Senior Member, IEEE} and Peter J. Schreier$^*$, \emph{Senior Member, IEEE}
\\ \thanks{
$^*$Mohammad Soleymani, Christian Lameiro and Peter J. Schreier are with the Signal and System Theory Group, Universit\"at Paderborn, Germany, http://sst.upb.de  (emails: \protect\url{{mohammad.soleymani, christian.lameiro, peter.schreier}@sst.upb.de}).  

$^\dag$Ignacio Santamaria is with the Department of Communications Engineering, University of Cantabria (email: \protect\url{i.santamaria@unican.es}).

The work of M. Soleymani, C. Lameiro and P. J. Schreier was supported by the German Research Foundation
(DFG) under grants LA 4107/1-1, SCHR 1384/7-1 and SCHR 1384/8-1. 
The work of I. Santamaria was supported by MINECO of Spain and AEI/FEDER funds of the E.U., under grant TEC2016-75067-C4-4-R (CARMEN).}
}
\maketitle

\begin{abstract}
Hardware non-idealities are among the main performance restrictions  for  upcoming wireless communication systems. Asymmetric hardware distortions (HWD) happen when the impairments of the I/Q branches are correlated or imbalanced, which in turn generate improper additive interference at the receiver side. When the  interference is improper, as well as in other interference-limited scenarios, improper Gaussian signaling (IGS) has been shown to provide rate and/or power efficiency benefits. 
 In this paper, we investigate the rate benefits of IGS in a two-user interference channel (IC) with additive asymmetric HWD when interference is treated as noise.  
We propose two  iterative algorithms to optimize the parameters of the improper transmit signals. We first rewrite the rate region as an pseudo-signal-to-interference-plus-noise-ratio (PSINR) region and employ  majorization minimization and fractional programming to find a suboptimal solution for the achievable user rates. Then, we propose a simplified algorithm based on a separate optimization of the powers and complementary variances of the users, which exhibits lower computational complexity. We show that IGS can improve the performance of the two-user IC with additive HWD. Our proposed algorithms outperform  proper Gaussian signaling and competing IGS algorithms in the literature that do not consider asymmetric HWD.
\end{abstract} 
\begin{keywords}
 Achievable rate region, asymmetric hardware distortions,   difference of convex programming, generalized Dinkelbach algorithm, improper Gaussian signaling, interference channel. 
\end{keywords}

\section{Introduction}

One of the targets of 5G is reaching a data rate more than 1000 times greater than the data rate of current cellular systems \cite{andrews2014will}. 
However, reaching this goal entails many challenges. 
Among them is to overcome the non-idealities, i.e., hardware distortions (HWD), of devices which can result in a substantial performance degradation \cite{ buzzi2016survey, heath2016overview, javed2017asymmetric}. 
HWD are due to various imperfections in transceivers, including I/Q imbalance, non-linear power amplifiers, imperfect and/or low resolution analog-to-digital
and digital-to-analog converters, frequency/phase offset and so on \cite{heath2016overview,  javed2017asymmetric, zhu2017analysis, zhang2016achievable, xia2015hardware,bjornson2014massive,bjornson2013capacity,bjornson2013new}. 
Another main challenge for data-rate enhancement is to handle interference from other users, and hence interference management techniques play a key role in 5G \cite{andrews2014will}. 
Recently, it has been shown that improper Gaussian signaling (IGS) can improve the performance of   
various {\em interference-limited} systems \cite{ javed2017full,cadambe2010interference,yang2014interference, ho2012improper, zeng2013transmit, nguyen2015improper, lagen2016superiority, kurniawan2015improper, lameiro2017rate, lameiro2015benefits,amin2017overlay,soleymani2019energy,soleymani2019robust,Sole1909:Energy,lagen2016coexisting,gaafar2017underlay}. 
In IGS  schemes, the real and imaginary parts of the signal are correlated and/or have unequal powers \cite{schreier2010statistical,adali2011complex}. 
While proper Gaussian signaling (PGS) achieves channel capacity for point-to-point communications in the presence of proper Gaussian noise \cite{cover2012elements}, 
this is not the case 
under improper Gaussian noise that arises as a result of asymmetric HWD \cite{javed2017asymmetric, javed2017full,javed2018improper,javed2017impact}. 

\subsection{Related work}
The effect of HWD  is studied in \cite{zhu2017analysis, zhang2016achievable, xia2015hardware,bjornson2014massive,bjornson2013capacity,bjornson2013new} for various scenarios. 
In  \cite{zhu2017analysis}, the secrecy performance of
downlink massive multiple-input multiple-output (MIMO) systems was considered with HWD and a passive
multiple-antenna eavesdropper. 
The paper \cite{zhang2016achievable} analyzed the achievable rate of massive MIMO
 systems with Rician channels and HWD. 
In \cite{xia2015hardware}, the authors considered a full-duplex massive MIMO relay with HWD and proposed a scheme to mitigate the distortion by exploiting statistical
knowledge of the channels. 
In \cite{bjornson2014massive}, the authors studied a massive MIMO system with a new system model for HWD at the transceivers.
The paper \cite{bjornson2013new} studied the performance of dual-hop relaying with different protocols in the presence of HWD.

In the aforementioned papers, symmetric HWD are considered. 
Nevertheless, HWD can, in general, provoke asymmetric or improper  distortion in both the transmitted and received signal
\cite{javed2018improper,boulogeorgos2016energy,javed2017asymmetric, javed2017full,javed2017impact}. 
The paper \cite{javed2017asymmetric} considered  IGS in  a single-input, multiple-output (SIMO) system with additive asymmetric HWD and showed that IGS improved the performance of the system.
In \cite{javed2017full}, the authors investigated the effect of IGS in a relay network with additive asymmetric HWD. 
They maximized the achievable rate of the relay network by optimizing the complementary variance of the transmitted signal in the source and relay nodes.

Improper signaling schemes have also been proposed to improve different performance metrics in interference-limited networks with ideal devices \cite{cadambe2010interference,yang2014interference, ho2012improper, zeng2013transmit,nguyen2015improper, lagen2016superiority, kurniawan2015improper, lameiro2017rate, lameiro2015benefits, amin2017overlay,soleymani2019energy,soleymani2019robust,gaafar2017underlay}. 
In \cite{cadambe2010interference}, IGS was considered as an interference management tool for the first time in the literature, where the authors considered a three-user interference channel (IC) and showed that 
IGS can improve the degrees-of-freedom (DoF) in this scenario.
The paper \cite{yang2014interference} showed that IGS can increase the DoF of MIMO X channels. 
In \cite{ho2012improper, zeng2013transmit, nguyen2015improper, lagen2016superiority, kurniawan2015improper, lameiro2017rate, lameiro2015benefits,amin2017overlay,soleymani2019energy,soleymani2019robust,lagen2016coexisting},
the authors studied the performance of IGS when Treating Interference as Noise (TIN) was the strategy used for decoding. 
 The paper \cite{ho2012improper} showed that IGS can improve the performance of the two-user interference channel, while in \cite{zeng2013transmit} IGS was used to 
 optimize the rate of the $K$-user MIMO interference channel. 
Moreover, the authors in \cite{zeng2013transmit} derived the rate region of the two-user single-input, single-output (SISO) IC with TIN by solving a semidefinite programming (SDP) problem, showing that  IGS can enlarge the rate region and improve the performance of the system.
The paper \cite{nguyen2015improper} showed that IGS can reduce the symbol error rate of the $K$-user IC. 
In \cite{ lagen2016superiority, kurniawan2015improper, lameiro2017rate},  benefits of IGS were  studied in different Z-IC scenarios. 
In \cite{lameiro2015benefits,amin2017overlay}, the authors showed that IGS improves the performance of underlay and overlay cognitive radio systems, respectively.
Finally,  \cite{gaafar2017underlay} showed that IGS can improve the performance of full-duplex relaying systems with fading channels.

\subsection{Contribution}
In this paper, 
we study the performance of IGS in a two-user IC with additive asymmetric HWD with TIN. 
To the best of our knowledge, it is the first work addressing the SISO IC with asymmetric HWD.
We assume that the transceivers of both users produce additive asymmetric HWD noise, and model the HWD as an additive improper Gaussian noise, similar to \cite{javed2018improper, javed2017asymmetric, javed2017full, javed2017impact}. 
We devise
two  iterative algorithms  to 
derive suboptimal solutions for the achievable rate region of the two-user IC.
To this end, we rewrite the rate region as a pseudo-signal-to-interference-plus-noise-ratio (PSINR) region and employ sequential optimization approaches to solve the resulting problems. 

In our first proposed algorithm, we employ majorization minimization (MM) as well as fractional programming (FP) and the well-known generalized Dinkelbach algorithm. 
MM is an iterative algorithm and  consists of two steps in every iteration: i) majorization, and ii) minimization \cite{sun2017majorization}. In the majorization step, the objective function is approximated by a surrogate function. Then, the approximated problem is solved in the minimization step. 
In other words, MM solves a non-convex optimization problem  by solving a sequence of surrogate optimization problems, which can be solved easier than the original problem \cite{sun2017majorization}. 
In our algorithm, to solve each surrogate problem, we apply the generalized Dinkelbach algorithm,  
which is a powerful tool to solve multiple ratio maximin problems \cite{crouzeix1991algorithms, zappone2015energy}.
In Dinkelbach-based algorithms, an iterative optimization is performed, in which
the fractional functions are replaced by surrogate functions at each iteration. 
The generalized Dinkelbach algorithm permits solving fractional programming efficiently and results in the global optimal solution of the original optimization problem if the optimization problem at each iteration is perfectly solved, i.e., its global optimum is obtained \cite{crouzeix1991algorithms, zappone2015energy, dinkelbach1967nonlinear, shen2018fractional}. 

In our second  proposed algorithm, we employ a separate optimization of powers and complementary variances. 
We first optimize the powers transmitted by the users by employing the well-known bisection method, which transforms the original problem into a sequence of feasibility problems, and derive 
a closed-form solution for the feasibility problem. 
In order to obtain the complementary variances, we employ difference of convex programming (DCP), which is a special case of sequential convex programming (SCP) and falls into MM \cite{sun2017majorization,lanckriet2009convergence}. 
In DCP, the objective function and/or constraints are difference of two convex/concave functions. 
DCP solves a non-convex problem by solving a sequence of convex optimization problems and converges to a stationary point\footnote{A stationary point of a constrained optimization problem satisfies the corresponding
Karush-Kuhn-Tucker (KKT) conditions \cite{lanckriet2009convergence}.} of the original problem \cite{lanckriet2009convergence}.

The main contributions of this paper are as in the following:
\begin{itemize}
\item We first propose an iterative algorithm based on a sequential optimization method, in which we solve a sequence of fractional optimization problems \cite{sun2017majorization, aubry2018new}. 
We derive the global optimal solution of each surrogate problem by FP and the generalized Dinkelbach algorithm.  
Our first proposed algorithm obtains a stationary point of the PSINR region. 

\item We also propose a simplified algorithm that is computationally less expensive  than our proposed algorithm with FP. 
This simplified algorithm is based on a separate optimization of powers and complementary variances of users. We employ a bisection method to obtain the powers and derive a closed-form solution for powers in each iteration.
Then, we employ DCP to find the complementary variances.

\item Our results show that IGS enlarges the achievable rate of the two-user IC in the presence of additive asymmetric HWD, and that
there is a significant performance improvement by IGS for highly asymmetric 
HWD noise. 
Moreover, both of our proposed algorithms outperform existing PGS and other existing IGS algorithms. 
\end{itemize}
\subsection{Paper outline}
The rest of this paper is organized as follows.  Section \ref{secII} describes the scenario and formulates the achievable-rate-region problem. 
 In Section \ref{FP}, we propose our algorithm based on MM and FP, and 
 in Section IV, we develop a simplified version of this algorithm.  
Finally, Section \ref{secIV} presents some numerical results.
 
\section{System Model}\label{secII}
\subsection{Preliminaries of IGS}
Let us consider a zero-mean complex Gaussian random variable $x$ with variance $p_x=\mathbb{E}\{|x|^2\}$ and complementary variance $q_x=\mathbb{E}\{x^2\}$ \cite{schreier2010statistical,adali2011complex}.  
Note that the complementary variance is complex and $|q_x|\leq p_x$. We denote the probability distribution of $x$ by $x\sim\mathcal{CN}(m_{x},p_x,q_x)$, where $m_{x}=0$ is the mean of $x$. 
We define  
the complex correlation coefficient of $x$ as $\tilde{\kappa}_x=\frac{q_x}{p_x}$, where $|\tilde{\kappa}_x|\in [0,1]$ is the so-called circularity coefficient. If $\tilde{\kappa}_x=0$, $x$ is proper; otherwise, it is improper \cite{schreier2010statistical,adali2011complex}. 
We call $x$ maximally improper if $|\tilde{\kappa}_x|=1$.

\subsection{Hardware distortion model}
  \begin{figure}[t]
\centering
\includegraphics[width=0.5\textwidth]{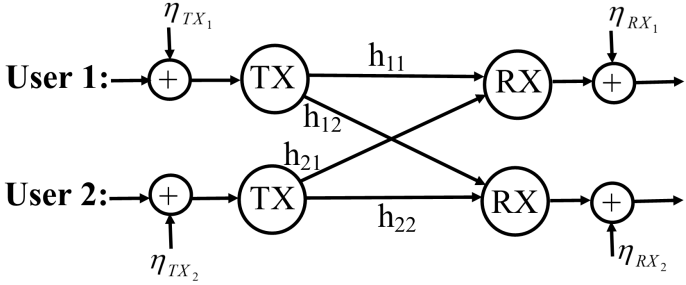}
\caption{The channel model for the SISO two-user IC.}
\label{Fig1}
\end{figure}

In this paper, we employ the distortion model in  \cite{javed2018improper,javed2017full,javed2017asymmetric,javed2017impact} 
and model the 
aggregated  effect of HWD on the transceiver of a communication link with an improper Gaussian additive noise as
 \begin{equation}
y=\sqrt{P}h(x+\eta)+n,
\end{equation} 
where $y$, $x$, $P$, $h$, $\eta$, and $n$ are the received signal, transmitted symbol, transmission power, channel coefficient, aggregated HWD noise and additive complex proper Gaussian noise, respectively. 
The aggregated HWD noise is modeled as an improper complex Gaussian random variable with  probability distribution $\eta\sim \mathcal{CN}(0,\sigma_{\eta}^2,\tilde{\sigma}_{\eta}^2)$, 
where $\sigma_{\eta}^2=\sigma_{\eta_{\text{TX}}}^2+\sigma_{\eta_{\text{RX}}}^2$ and $\tilde{\sigma}_{\eta}^2=\tilde{\sigma}_{\eta_{\text{TX}}}^2+\tilde{\sigma}_{\eta_{\text{RX}}}^2$ are the variance and complementary variance of $\eta$, respectively, 
both of which are composed of contributions at the transmitter side (denoted TX) and the receiver side (denoted RX). Please refer to \cite[Lemma 1]{javed2018improper} for more details about this model.

It is worth mentioning that this model is an extension of the model in \cite{zhu2017analysis, zhang2016achievable, xia2015hardware,bjornson2014massive,bjornson2013capacity,bjornson2013new}, where the HWD is modeled as  additive proper Gaussian noise. 
However, as indicated in e.g., \cite{javed2018improper,javed2017full,schenk2008rf,javed2017asymmetric,javed2017impact,javed2019multiple,javed2019asymmetric,boulogeorgos2016energy},  the aggregated HWD is, in general, improper due to I/Q imbalance.
Note that the variances and complementary variances of HWD noise are not only a function of device parameters, but also a linear function of the transmission power and channel gain, meaning that higher transmission power results in higher HWD noise \cite{javed2017asymmetric,javed2018improper}.
Moreover, even if the channel noise is proper, the aggregated distortion is improper due to the asymmetric HWD. 

\subsection{Network scenario and signal model}
We consider a two-user IC with additive asymmetric HWD at the transmitters and receivers of both users, as depicted in Fig. \ref{Fig1}.\footnote{It is worth mentioning that our algorithms can easily be extended to the $K$-user IC. However, we consider only the 2-user IC for the ease of illustration.} 
We assume that users are allowed to employ IGS and treat the interference as noise.   
Using the  proposed HWD model, the received signals at receiver $k$ is
\begin{equation}
y_k\!=\sqrt{p_1}h_{1k}(x_1+\eta_{1k})+\sqrt{p_2}h_{2k}(x_2+\eta_{2k})+n_k,
\end{equation}
respectively, where 
$x_k$, $h_{jk}$, $n_k$, and $\eta_{jk}$ for $j,k\in\{1,2\}$ are the transmit signal of user $k$, channel 
between transmitter $j$ and receiver $k$, independent zero-mean proper complex Gaussian noise with variance $\sigma^2$, and the  aggregated HWD noise of the link between transmitter $j$ and receiver $k$, 
respectively. 
Since the transmitted signals $x_1$ and $x_2$ are improper complex Gaussian, the achievable rate of user $k\in\{1,2\}$ is \cite{javed2017full, lameiro2017rate,zeng2013transmit} given by \eqref{R1}, shown at the top of the next page,
\begin{figure*}
\begin{equation}\label{R1}
R_k\!\!=\! \frac{1}{2}\!\log_2\!\!\left(\!\!\frac{\left(\sigma^2+\sum_{j=1}^{2}p_j|h_{jk}|^2(1\!+\!\sigma_{\eta_{jk}}^2)\!\right)^2\!\!\!-\!\left|\sum_{j=1}^{2}(q_j\!+\! p_j\tilde{\sigma}_{\eta_{jk}}^2)h_{jk}^2\right|^2}{\left(\sigma^2+\sum_{j=1}^{2}p_j|h_{jk}|^2(1\!+\!\sigma_{\eta_{jk}}^2)\!-\!p_k|h_{kk}|^2\!\right)^2\!\!\!-\!\left|\sum_{j=1}^{2}(q_j\!+\! p_j\tilde{\sigma}_{\eta_{jk}}^2)h_{jk}^2-q_kh_{kk}^2\right|^2}\!\right)\!\!,
\end{equation}
\setcounter{equation}{4}
\begin{align}
\mathbf{a}_k&=\!\left[\!\!\begin{array}{cc}|h_{1k}|^2(1\!+\!\sigma^2_{\eta_{1k}})& |h_{2k}|^2(1\!+\!\sigma^2_{\eta_{2k}})\end{array}\!\!\right]^T,           &  \mathbf{f}_k&=\!\left[\!\!\begin{array}{cc}h_{1k}^2& h_{2k}^2\end{array}\!\!\right]^H,  \label{matf}    \\
\label{matf-1}\tilde{\mathbf{f}}_k&=\!\left[\!\!\begin{array}{cc}h_{1k}^2\tilde{\sigma}^2_{\eta_{1k}}& h_{2k}^2\tilde{\sigma}^2_{\eta_{2k}}\end{array}\!\!\right]^H,      &   \\
\mathbf{b}_1&=\!\left[\!\!\begin{array}{cc}|h_{11}|^2\!\sigma^2_{\eta_{11}}& |h_{21}|^2(1\!+\!\sigma^2_{\eta_{21}})\end{array}\!\!\right]^T,         &  \mathbf{g}_1&=[\begin{array}{ccc}0& h_{21}^2\end{array}]^H,   \\
\mathbf{b}_2&=\!\left[\!\!\begin{array}{cc}|h_{12}|^2(1\!+\!\sigma^2_{\eta_{12}})& |h_{22}|^2\sigma^2_{\eta_{22}}\end{array}\!\!\right]^T,   &  \mathbf{g}_2&=[\begin{array}{ccc}h_{12}^2& 0\end{array}]^H,\\
\label{CandQ} \mathbf{q}&=[\begin{array}{cc}q_1 &q_2\end{array}]^T,&\mathbf{p}&=[\begin{array}{cc}p_1& p_2\end{array}]^T.
\end{align} 
\hrulefill
\end{figure*}
where $p_k$, $q_k$, $\sigma_{\eta_{jk}}^2$, and $\tilde{\sigma}_{\eta_{jk}}^2$ for $i,j\in\{1,2\}$ are, respectively, the transmission power of user $k$,the  complementary variance of the transmitted signal of user $k$, the aggregated variance and the complementary variance of the HWD noise in  the link between user $j$ and user $k$. 
 The rate of user $k\in\{1,2\}$ can be written using vector notation as 
\setcounter{equation}{3}
\begin{equation}\label{ak-0}
R_k=\frac{1}{2}\log_2\left(\frac{(\sigma^2+\mathbf{a}_k^T\mathbf{p})^2-|\mathbf{f}_k^H\mathbf{q}+\tilde{\mathbf{f}}_k^H\mathbf{p}|^2}{(\sigma^2+\mathbf{b}_k^T\mathbf{p})^2-|\mathbf{g}_k^H\mathbf{q}+\tilde{\mathbf{f}}^H_k\mathbf{p}|^2}\right),
\end{equation}
where 
the corresponding parameters are defined in (5)-(9), shown at the top of the next page.
We also define 
$\Omega\!=\!\{p_k,q_k\!:0\leq p_k\leq P_k, |q_k|\leq p_k,k=1,2\}$ as the feasible set of the design parameters, where $P_k$ is the power budget of user $k$. Note that since $q_k$ for $k=1,2$ is the complementary variance of user $k$, its absolute value has to be not greater than the transmission power of user $k$, i.e., $|q_k|\leq p_k$.

It is to be noted that, 
in practice, discrete rather than Gaussian signaling is employed (see, e.g., \cite{santamaria2018information, javed2019asymmetric, nguyen2015improper}), which will lead to performance degradation  with respect to IGS. 
The significance of studying  improper Gaussian signals is that it shows us whether {\em improper} signaling may in principle achieve performance improvements over {\em proper} signaling.  
In this paper, we focus on IGS and leave the analysis and design of improper discrete constellations for future work. 
\subsection{Problem Statement}
In this paper, we aim at obtaining the boundary of the achievable rate region for the described two-user IC. 
To this end, we employ the following definition of the Pareto boundary for the achievable rate region. 
\begin{definition}[\!\cite{lameiro2017rate, jorswieck2008complete}]
The rate pair ($R_1,R_2$) is called Pareto-optimal if ($R_1^{\prime},R_2$) and ($R_1,R_2^{\prime}$), with $R_1^{\prime}>R_1$ and $R_2^{\prime}>R_2$ , are not achievable.
\end{definition}

The rate region is the union of all these achievable rate tuples, i.e., 
$\mathcal{R}= \underset{\{\mathbf{p},\mathbf{q}\}\in\Omega}{\bigcup}(R_1,R_2)$, 
and its boundary can be derived by the rate profile technique as in the following  optimization problem \cite{zeng2013transmit}
\setcounter{equation}{9}
\begin{subequations}
\begin{align}
 \underset{R,\mathbf{p},\mathbf{q}
 }{\text{maximize}}\,\,\,\,\,\,\,\,  & 
  R &\\
 \label{5b} \text{s.t.}  \,\,\,\,\,\,\,\, \,\,\,\,\,\,\,\,\,\,\,\,\,\,\,\,&  R_{k}\geq\lambda_k R,&k=1,2,
  \\ & 0\leq p_k\leq P_k,&k=1,2,\label{P-Const} \\
 \label{k-Const} &  |q_k|\leq p_k,
 &k=1,2,
 \end{align}
\label{rateregion1}
\end{subequations}
\!where $\lambda_1,\lambda_2\geq 0$ are fixed and $\lambda_1+\lambda_2=1$. 
We can obtain the boundary of the rate region by solving \eqref{rateregion1} for different rate-profile parameters, i.e., $\lambda_1$ and $\lambda_2$.
Note there are efficient algorithms to derive the global optimal solution of convex optimization problems \cite{boyd2004convex,aubry2018new,yang2017unified}. 
However, we are unable to apply these algorithms to \eqref{rateregion1} 
due to the fact that the rates are not concave functions of the optimization variables, which makes \eqref{rateregion1}  non-convex \cite{boyd2004convex,aubry2018new,yang2017unified}. 
Hence, in this paper we propose numerical algorithms to derive suboptimal solutions to \eqref{rateregion1}.

The paper \cite{javed2018improper} proposed an algorithm based on DCP to maximize the achievable rate of a multihop relay system with  additive asymmetric HWD, in which all nodes transmit with maximum power, by optimizing over the complementary variances. 
Such algorithms cannot be applied for a joint optimization of powers and complementary variances since, in this case,  the rates are not a difference of two jointly concave/convex functions in $\mathbf{p}$ and  $\mathbf{q}$. Hence, we solve \eqref{rateregion1} by  MM and FP. 
In MM, the objective function and constraints of an optimization problem are not required to follow a very specific structure such as being a difference of two convex/concave functions, which makes it more powerful than DCP. 

To solve \eqref{rateregion1}, we rewrite it 
such that 
it is more suitable to be solved with MM and FP. 
To this end, we employ the PSINR profile technique in \cite{lameiro2013amplify,santamaria2010capacity} to write an optimization problem that results in the solution of \eqref{rateregion1}.
We define the PSINR profile as 
\begin{subequations}
\begin{align}
 \underset{E,\mathbf{p},\mathbf{q}
 }{\text{maximize}}\,\,\,\,\,\,\,\,  & 
  E &\\
 \label{5b} \text{s.t.}  \,\,\,\,\,\,\,\, \,\,\,\,\,\,\,\,\,\,\,\,\,\,\,\,&  E_k(\mathbf{p},\mathbf{q})\geq 1+\alpha_k E,&k=1,2,
  \\ 
  & 0\leq p_k\leq P_k,&k=1,2, \\
 &  |q_k|\leq p_k,
 &k=1,2,
 \end{align}
\label{enpr}
\end{subequations}
where $\alpha_1\geq 0$ and $\alpha_2\geq 0$ are constants,  $\alpha_1+\alpha_2=1$, and
\begin{equation}\label{q-26-en}
E_k(\mathbf{p},\mathbf{q})\triangleq\frac{(\sigma^2+\mathbf{a}_k^T\mathbf{p})^2-|\mathbf{f}_k^H\mathbf{q}+\tilde{\mathbf{f}}_k^H\mathbf{p}|^2}{(\sigma^2+\mathbf{b}_k^T\mathbf{p})^2-|\mathbf{g}_k^H\mathbf{q}+\tilde{\mathbf{f}}^H_k\mathbf{p}|^2}=\frac{u_k(\mathbf{p},\mathbf{q})}{v_k(\mathbf{p},\mathbf{q})}.
\end{equation}
We can derive the  boundary of the PSINR region by varying $\alpha_1\in [0,1]$. 
Note that $E_k(\mathbf{p},\mathbf{q})\geq 1$ for $k=1,2$ since the rates are non-negative. 
Moreover, the numerator and denominator of $E_k(\mathbf{p},\mathbf{q})$ are strictly positive because the rates are bounded and non-negative. 
In the following lemma, we show that this technique results in the boundary of the rate region in \eqref{rateregion1}.
\begin{lemma}[\!\cite{lameiro2013amplify,santamaria2010capacity}]
Every point in the boundary of the
rate region corresponds to a point in the boundary of the PSINR region, and vice versa.
\end{lemma}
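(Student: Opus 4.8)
The plan is to exploit the fact that the rate and PSINR of each user are tied together by a fixed, strictly monotone bijection. From \eqref{ak-0} and \eqref{q-26-en} we have $R_k=\tfrac{1}{2}\log_2 E_k(\mathbf{p},\mathbf{q})$, hence $E_k=2^{2R_k}$. Writing $\phi(R)=2^{2R}$, the map $\phi$ is continuous and strictly increasing from $[0,\infty)$ onto $[1,\infty)$, with inverse $\phi^{-1}(E)=\tfrac{1}{2}\log_2 E$. Since both the rate region $\mathcal{R}$ and the PSINR region $\mathcal{E}$ are generated by sweeping the \emph{same} feasible set $\Omega$, the componentwise map $\Phi(R_1,R_2)=(\phi(R_1),\phi(R_2))$ carries $\mathcal{R}$ onto $\mathcal{E}$ bijectively: for each $(\mathbf{p},\mathbf{q})\in\Omega$ the rate tuple and the PSINR tuple obtained from it satisfy $(E_1,E_2)=\Phi(R_1,R_2)$.

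First I would record that $\Phi$ is increasing in each coordinate, so it preserves the componentwise order used in the Pareto definition: $R_1'>R_1$ is equivalent to $\phi(R_1')>\phi(R_1)$, and likewise in the second coordinate, while holding one rate fixed corresponds exactly to holding the associated PSINR fixed (again because $\phi$ is a bijection in each coordinate). This is the structural fact that forces the two boundaries to coincide under $\Phi$.

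Then I would prove the equivalence by contradiction in both directions. Suppose $(R_1,R_2)$ is Pareto-optimal in $\mathcal{R}$ but $(E_1,E_2)=\Phi(R_1,R_2)$ is \emph{not} Pareto-optimal in $\mathcal{E}$; by the definition there is an achievable $(E_1',E_2)\in\mathcal{E}$ with $E_1'>E_1$ (or the symmetric statement in the second coordinate). Applying $\phi^{-1}$ yields an achievable rate tuple $(\phi^{-1}(E_1'),R_2)$ with $\phi^{-1}(E_1')>R_1$, contradicting the Pareto-optimality of $(R_1,R_2)$. The reverse implication follows identically with $\phi$ in place of $\phi^{-1}$, and the symmetric coordinate is handled the same way. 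Thus boundary points map to boundary points in both directions, which is exactly the claim.

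The argument is essentially a monotone-transformation lemma, so I do not anticipate a serious technical obstacle; the only points needing care are (i) verifying that $\Phi$ is a genuine bijection between the two regions rather than a mere pointwise relation, which rests on both regions being indexed by the identical parameter set $\Omega$ as noted above, and (ii) respecting the precise ``hold-one-coordinate-fixed'' form of the Pareto-optimality definition, rather than the more common strict-dominance form, when transporting non-achievability across $\phi$. If an explicit link to the profile formulations \eqref{rateregion1} and \eqref{enpr} is desired, one can additionally observe that a rate-profile optimizer for a given $\lambda$ and a PSINR-profile optimizer for a suitably matched $\alpha$ are attained at the same $(\mathbf{p},\mathbf{q})$, but this refinement is not needed for the boundary correspondence asserted here.
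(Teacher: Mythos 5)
Your proposal is correct and takes essentially the same route as the paper: the paper's own proof is exactly this monotone-transformation contradiction argument, mapping a feasible dominating PSINR pair back to a dominating rate pair via $R_k=\tfrac{1}{2}\log_2 E_k$ (your $\phi(R)=2^{2R}$ is the correct bijection; the paper's text writes $E_k=2^{R_k}$ but then uses $0.5\log_2(\cdot)$, consistent with your version). Your additional remarks on the shared index set $\Omega$ and order preservation only make explicit what the paper leaves implicit.
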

\begin{proof}
Assume there exists a pair $(R_1,R_2)$ on the boundary of the achievable rate region that is not on the boundary of the PSINR region. 
In other words, the pair $(E_1=2^{R_1},E_2=2^{R_2})$, which is a feasible PSINR pair, is not on the boundary of the PSINR region, and hence there exist $E_1^{\prime}$ and/or $E_2^{\prime}$ such that the pairs $(E_1^{\prime}>E_1,E_2)$ and/or $(E_1,E_2^{\prime}>E_2)$ are feasible.
Since the logarithm functions are monotonically increasing, the rate pairs $(0.5\log_2(E_1^{\prime})>R_1,R_2)$ or $(R_1,0.5\log_2(E_2^{\prime})>R_2)$ are achievable, which implies that $(R_1,R_2)$ is not on the boundary of the rate region.  Similarly, it can be shown that every point in the boundary of the PSINR region associates with a point in the boundary of the
rate region.
\end{proof}
Note that 
we can rewrite \eqref{enpr} as the following maximin optimization problem by removing the variable $E$
\begin{equation}
 \underset{0\leq p_k\leq P_k,|q_k|\leq p_k
 }{\text{maximize}}\,\,\,\,\,\,\,\,   
  \underset{k=1,2}{\min}\left\{\frac{E_k(\mathbf{p},\mathbf{q})-1}{\alpha_k}\right\}. 
\label{enpr-mxmn}
\end{equation}

\section{Boundary of the rate region by Fractional Programming}\label{FP}

In this section, we solve the PSINR profile problem in \eqref{enpr} by MM, which results in solving a sequence of fractional optimization problems. 
We solve each fractional optimization problem by FP and the generalized Dinkelbach algorithm \cite{dinkelbach1967nonlinear,zappone2015energy,shen2018fractional}.
Our proposed algorithm converges to a stationary point of \eqref{enpr}. 
We first provide preliminaries on generalized Dinkelbach's algorithm in Section \ref{dinkel-alg} and then propose our algorithm in \ref{sec-iii-b}.

\subsection{Preliminaries of generalized Dinkelbach's algorithm}\label{dinkel-alg}
 Dinkelbach's algorithm is a powerful tool that solves FP problems, which was proposed to handle single-ratio functions. The generalized Dinkelbach algorithm is a modified Dinkelbach algorithm to solve maximin multiple-ratio problems \cite{crouzeix1991algorithms}. 
  The generalized Dinkelbach algorithm is an iterative approach, in which the fractional functions are approximated by surrogate functions at each iteration. 
  In the following lemma, we present some conditions that are used in the generalized Dinkelbach algorithm. 
  \begin{lemma}[\!\!\cite{crouzeix1991algorithms,zappone2015energy}]
Consider the fractional functions $\frac{u_i(\mathbf{x})}{v_i(\mathbf{x})}$, where $u_i(\mathbf{x})$ and $v_i(\mathbf{x})$ are continuous in $\mathbf{x}$, $v_i(\mathbf{x})$ is strictly positive in $\mathbf{x}$, and $\mathbf{x}$ is a vector with dimension $n$ that belongs to a compact set $\mathcal{X}$. 
Let us define 
\begin{align}\label{surr-dink}
V(\mu)&=\underset{\mathbf{x}}{\max}\,\,\,\underset{i}{\min}\left(u_i(\mathbf{x})-\mu v_i(\mathbf{x})\right),\\
\label{surr-dink-2}\bar{\mu}&=\underset{\mathbf{x}}{\max}\,\,\,\underset{i}{\min}\left(\frac{u_i(\mathbf{x})}{v_i(\mathbf{x})}\right),
\end{align}
where $V(\mu)$, $\bar{\mu}$, and $\mu$ 
are real and scalar, and have the following properties. 
\begin{enumerate}
  \item $V(\mu)$ is continuous and strictly decreasing in $\mu$.
  \item The optimization problems \eqref{surr-dink} and \eqref{surr-dink-2} always have optimal solutions.
  \item $\bar{\mu}$ is finite and $V(\bar{\mu})=0$.
  \item $V(\mu)$ has a unique root, and $V(\mu)=0$ implies $\mu=\bar{\mu}$.
\end{enumerate}
\end{lemma}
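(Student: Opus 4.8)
The plan is to establish the four properties in the order 2, 1, 3, 4, since the existence of the maximizers underpins the other arguments and the uniqueness of the root in Property~4 is a corollary of what precedes it. First I would dispose of Property~2. Because $\mathcal{X}$ is compact and each $u_i,v_i$ is continuous, the map $\mathbf{x}\mapsto\min_i\left(u_i(\mathbf{x})-\mu v_i(\mathbf{x})\right)$ is continuous, being the minimum of finitely many continuous functions, so the Weierstrass extreme value theorem guarantees that the maximum in \eqref{surr-dink} is attained. Since $v_i$ is continuous and strictly positive on the compact set $\mathcal{X}$, it is bounded below away from zero, so each ratio $u_i/v_i$ is continuous and the same argument yields a maximizer in \eqref{surr-dink-2}; in particular $\bar{\mu}$ is finite, which already gives half of Property~3.

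Next I would treat Property~1. For strict monotonicity, fix $\mu_1<\mu_2$ and let $\mathbf{x}_2$ attain $V(\mu_2)$, which exists by the previous step. Since $v_i(\mathbf{x})>0$, for every $i$ one has $u_i(\mathbf{x}_2)-\mu_1 v_i(\mathbf{x}_2)>u_i(\mathbf{x}_2)-\mu_2 v_i(\mathbf{x}_2)$, and taking the minimum over the finitely many indices preserves the strict inequality, so $V(\mu_1)\geq\min_i\left(u_i(\mathbf{x}_2)-\mu_1 v_i(\mathbf{x}_2)\right)>\min_i\left(u_i(\mathbf{x}_2)-\mu_2 v_i(\mathbf{x}_2)\right)=V(\mu_2)$. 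For continuity I would exploit that $v_i$ is bounded on $\mathcal{X}$, say $v_i(\mathbf{x})\leq M$: then for each fixed $\mathbf{x}$ the map $\mu\mapsto\min_i\left(u_i(\mathbf{x})-\mu v_i(\mathbf{x})\right)$ is $M$-Lipschitz, uniformly in $\mathbf{x}$, because the minimum operator is nonexpansive and each affine term changes by at most $v_i(\mathbf{x})|\mu-\mu'|\leq M|\mu-\mu'|$; passing to the supremum over $\mathbf{x}$ shows that $V$ itself is $M$-Lipschitz, hence continuous.

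The crux is the remaining claim $V(\bar{\mu})=0$ in Property~3. Let $\mathbf{x}^{\star}$ attain $\bar{\mu}$, so that $u_i(\mathbf{x}^{\star})/v_i(\mathbf{x}^{\star})\geq\bar{\mu}$ for all $i$; multiplying by $v_i(\mathbf{x}^{\star})>0$ gives $u_i(\mathbf{x}^{\star})-\bar{\mu}v_i(\mathbf{x}^{\star})\geq 0$ for all $i$, whence $V(\bar{\mu})\geq 0$. For the reverse inequality I would argue by contradiction: if $V(\bar{\mu})>0$, then some $\hat{\mathbf{x}}$ satisfies $u_i(\hat{\mathbf{x}})-\bar{\mu}v_i(\hat{\mathbf{x}})>0$ for all $i$, and dividing by $v_i(\hat{\mathbf{x}})>0$ yields $\min_i u_i(\hat{\mathbf{x}})/v_i(\hat{\mathbf{x}})>\bar{\mu}$, contradicting the maximality defining $\bar{\mu}$. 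Hence $V(\bar{\mu})=0$. Property~4 is then immediate: a strictly decreasing function has at most one root, and $\bar{\mu}$ is one by Property~3, so it is the unique root and $V(\mu)=0$ forces $\mu=\bar{\mu}$.

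I expect the main obstacle to be the two-sided argument for $V(\bar{\mu})=0$, since this is precisely where the equivalence between the maximin-ratio value $\bar{\mu}$ and the root of the auxiliary function $V$ -- the whole point of the Dinkelbach transform -- is actually used, and where the strict positivity of $v_i$ is essential to pass between the ratio form and the linearized form. The only other place demanding care is the continuity claim, which rests on the uniform-in-$\mathbf{x}$ Lipschitz bound afforded by the boundedness of $v_i$ on the compact set $\mathcal{X}$.
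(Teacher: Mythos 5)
The paper does not prove this lemma at all---it is imported verbatim from \cite{crouzeix1991algorithms,zappone2015energy} as a known result. Your argument is correct and is essentially the classical Jagannathan--Dinkelbach--Crouzeix proof found in those references: Weierstrass for existence, strict positivity of $v_i$ for strict monotonicity and for passing between the ratio and linearized forms, a uniform Lipschitz bound for continuity, and the two-sided sandwich at $\bar{\mu}$ for $V(\bar{\mu})=0$, with uniqueness of the root following from strict monotonicity. Nothing is missing.
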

The generalized Dinkelbach algorithm 
employs the surrogate function $V(\mu)$ in \eqref{surr-dink} and tries to iteratively find the unique root of $V(\mu)$, i.e., $\bar{\mu}$.
The algorithm starts with an initial point, e.g., $\mu^{(0)}=0$, then it updates $\mu$ to obtain $\bar{\mu}$.
Assuming $u_i(\mathbf{x})\geq 0$, which is the case we consider in this paper, $V(0)=\underset{\mathbf{x}}{\max}\,\,\,\underset{i}{\min}\left(u_i(\mathbf{x})\right)>0$.
Since $V(\mu)$ is continuous and strictly decreasing in $\mu$, $\mu$ is chosen monotonically increasing at each iteration ($\mu^{(l)}>\mu^{(l-1)}$) 
 until $V(\mu)$ approaches 0. 
At the $l$th iteration, $\mu^{(l)}$ is 
\begin{equation}\label{mu-eq}
\mu^{(l)}=\min\left(\frac{u_1(\mathbf{x}^{(l-1)})}{v_1(\mathbf{x}^{(l-1)})},\frac{u_2(\mathbf{x}^{(l-1)})}{v_2(\mathbf{x}^{(l-1)})}\right)>0,\end{equation} 
where $\mathbf{x}^{(l-1)}$ is 
 \begin{align}\label{su-dunk}
\mathbf{x}^{(l-1)}=\underset{\mathbf{x}}{\arg\max}\,\,\,\underset{i}{\min}\left(u_i(\mathbf{x})-\mu^{(l-1)} v_i(\mathbf{x})\right).
\end{align}
The generalized Dinkelbach algorithm updates $\mu^{(l)}$ and $\mathbf{x}^{(l-1)}$ based on \eqref{mu-eq} and \eqref{su-dunk}, respectively, until a convergence metric is met, e.g., $V(\mu^{(l)})<\epsilon$, where $\epsilon>0$. 
This algorithm converges linearly to the optimal solution \cite{crouzeix1991algorithms}. 

 Note that in order to apply the generalized Dinkelbach algorithm, 
 it is not required that $u_i(\mathbf{x})$ and $v_i(\mathbf{x})$ fulfill any other condition (except those in the lemma), 
 which makes this algorithm a powerful tool to solve different types of fractional problems. 
 If $u_i(\mathbf{x})$ and $v_i(\mathbf{x})$ are concave and convex functions, respectively,  the optimization problem at each iteration is convex and can  easily be solved. 
 However, in the general case, it might be difficult to efficiently solve the optimization problem at each iteration. 

\subsection{Proposed algorithm}\label{sec-iii-b}
We can apply the generalized Dinkelbach algorithm to derive the boundary of the PSINR region since the optimization problem can be written as a maximin weighted problem as indicated in \eqref{enpr-mxmn}.  
However, since $u_k(\mathbf{p},\mathbf{q})$ and $v_k(\mathbf{p},\mathbf{q})$ are not, respectively, concave and convex in optimization variables, the corresponding optimization problem in each iteration of the generalized Dinkelbach algorithm is not convex. 
Indeed, $u_k(\mathbf{p},\mathbf{q})$ and $v_k(\mathbf{p},\mathbf{q})$ are a difference of two convex/concave functions:
\begin{align}\label{ufunk}
u_k(\mathbf{p},\mathbf{q})&
=\underbrace{-|\mathbf{f}_k^H\mathbf{q}+\tilde{\mathbf{f}}_k^H\mathbf{p}|^2}_{\text{concave part}}+\underbrace{(\sigma^2+\mathbf{a}_k^T\mathbf{p})^2}_{\text{convex part}},\\
v_k(\mathbf{p},\mathbf{q})&
=\underbrace{-|\mathbf{g}_k^H\mathbf{q}+\tilde{\mathbf{f}}^H_k\mathbf{p}|^2}_{\text{concave part}}+\underbrace{ (\sigma^2+\mathbf{b}_k^T\mathbf{p})^2}_{\text{convex part}}.
\label{vfunk}
\end{align}
Hence, to solve \eqref{enpr-mxmn}, we employ a sequential optimization approach  by approximating $E_k(\mathbf{p},\mathbf{q})$ with a lower bound  $\tilde{E}_k(\mathbf{p},\mathbf{q},\mu)$ in each iteration  \cite{sun2017majorization, aubry2018new}. 
Then, we obtain the global optimal solution of each surrogate optimization problem by the generalized Dinkelbach algorithm. 
To this end, in each iteration, we first approximate  $u_k(\mathbf{p},\mathbf{q})$ by a lower bound concave function $\tilde{u}_k(\mathbf{p},\mathbf{q})$ and $v_k(\mathbf{p},\mathbf{q})$ by an upper bound convex function $\tilde{v}_k(\mathbf{p},\mathbf{q})$ as in the following lemma. 
\begin{lemma}\label{uvfunk-app}
A concave lower bound for $u_k(\mathbf{p},\mathbf{q})$ in the $m$th iteration is 
\begin{align}\nonumber
\tilde{u}_k^{(m)}(\mathbf{p},\mathbf{q})=&-|\mathbf{f}_k^H\mathbf{q}+\tilde{\mathbf{f}}_k^H\mathbf{p}|^2
+(\sigma^2+\mathbf{a}_k^T\mathbf{p}^{(m)})^2\\&
+2(\sigma^2+\mathbf{a}_k^T\mathbf{p}^{(m)})\mathbf{a}_k^T(\mathbf{p}-\mathbf{p}^{(m)}),
 \end{align}
Moreover, a convex upper bound for $v_k(\mathbf{p},\mathbf{q})$ in the $m$th iteration is
 \begin{align}
\nonumber \tilde{v}_k^{(m)}(\mathbf{p},\mathbf{q})=&(\sigma^2+\mathbf{b}_k^T\mathbf{p})^2-|\mathbf{g}_k^H\mathbf{q}^{(m)}+\tilde{\mathbf{f}}^H_k\mathbf{p}^{(m)}|^2\\&\nonumber
-2\mathfrak{R}\left[\tilde{\mathbf{f}}_k^H(\mathbf{g}_k^H\mathbf{q}^{(m)} +\tilde{\mathbf{f}}^H_k\mathbf{p}^{(m)})^*\right](\mathbf{p}-\mathbf{p}^{(m)})\\
&-2\mathfrak{R}\left[(\mathbf{g}_k^H\mathbf{q}^{(m)}+\tilde{\mathbf{f}}^H_k\mathbf{p}^{(m)})^*\mathbf{g}_k^H(\mathbf{q}-\mathbf{q}^{(m)})\right],
 \end{align}
where $\mathbf{p}^{(m)}$ and $\mathbf{q}^{(m)}$ are the power and complementary variances at the $m$th iteration, which are the solution of the previous iteration. Furthermore, $\mathfrak{R}\left[x\right]$ takes the real part of $x$.
\end{lemma}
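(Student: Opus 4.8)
The plan is to obtain each surrogate by applying a single first-order (tangent) inequality to exactly one of the two pieces in the difference-of-convex decompositions \eqref{ufunk} and \eqref{vfunk}, while leaving the piece that already has the desired curvature untouched. Concretely, to produce a \emph{concave} lower bound $\tilde u_k^{(m)}$ I would keep the concave term $-|\mathbf{f}_k^H\mathbf{q}+\tilde{\mathbf{f}}_k^H\mathbf{p}|^2$ as it stands and replace only the convex term $(\sigma^2+\mathbf{a}_k^T\mathbf{p})^2$ by its supporting hyperplane at $\mathbf{p}^{(m)}$; to produce a \emph{convex} upper bound $\tilde v_k^{(m)}$ I would keep the convex term $(\sigma^2+\mathbf{b}_k^T\mathbf{p})^2$ intact and replace only the concave term $-|\mathbf{g}_k^H\mathbf{q}+\tilde{\mathbf{f}}^H_k\mathbf{p}|^2$ by its tangent at $(\mathbf{p}^{(m)},\mathbf{q}^{(m)})$. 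Since adding an affine function preserves both concavity and convexity, each surrogate automatically inherits the required global curvature, and the bound direction follows from the standard gradient inequalities.

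For $\tilde u_k^{(m)}$ the computation is immediate: the map $\mathbf{p}\mapsto(\sigma^2+\mathbf{a}_k^T\mathbf{p})^2$ is a convex quadratic, so it lies above its tangent, giving $(\sigma^2+\mathbf{a}_k^T\mathbf{p})^2 \ge (\sigma^2+\mathbf{a}_k^T\mathbf{p}^{(m)})^2 + 2(\sigma^2+\mathbf{a}_k^T\mathbf{p}^{(m)})\mathbf{a}_k^T(\mathbf{p}-\mathbf{p}^{(m)})$. Substituting this into \eqref{ufunk} yields exactly the stated $\tilde u_k^{(m)}$ and, being the sum of the concave $-|\cdot|^2$ term and an affine term, it is concave in $(\mathbf{p},\mathbf{q})$ while lower-bounding $u_k$ everywhere; equality at $\mathbf{p}=\mathbf{p}^{(m)}$ is clear by inspection.

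The hard part, and the step I would treat most carefully, is the upper bound on the complex concave term in $v_k$. Writing $w\triangleq\mathbf{g}_k^H\mathbf{q}+\tilde{\mathbf{f}}^H_k\mathbf{p}$ and $w^{(m)}\triangleq\mathbf{g}_k^H\mathbf{q}^{(m)}+\tilde{\mathbf{f}}^H_k\mathbf{p}^{(m)}$, note that $w$ is affine in $(\mathbf{p},\mathbf{q})$, so $-|w|^2$ is concave. I would complete the square, $-|w|^2 = -|w^{(m)}|^2 - 2\mathfrak{R}[(w^{(m)})^*(w-w^{(m)})] - |w-w^{(m)}|^2$, and then drop the nonnegative remainder $|w-w^{(m)}|^2\ge 0$ to obtain the tangent upper bound $-|w|^2 \le -|w^{(m)}|^2 - 2\mathfrak{R}[(w^{(m)})^*(w-w^{(m)})]$. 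Expanding $w-w^{(m)}=\tilde{\mathbf{f}}^H_k(\mathbf{p}-\mathbf{p}^{(m)})+\mathbf{g}_k^H(\mathbf{q}-\mathbf{q}^{(m)})$ and splitting the real part into its $\mathbf{p}$ and $\mathbf{q}$ contributions reproduces the two $\mathfrak{R}[\cdot]$ terms in the stated $\tilde v_k^{(m)}$; adding back the untouched convex quadratic $(\sigma^2+\mathbf{b}_k^T\mathbf{p})^2$ gives a convex upper bound that is tight at $(\mathbf{p}^{(m)},\mathbf{q}^{(m)})$. The only genuine subtlety is the bookkeeping with the conjugates and the real-part operator when differentiating the modulus of a complex affine form; once the correct real-differential is used, the inequality direction is guaranteed by the dropped nonnegative term, and convexity again follows because only an affine correction has been added to the convex quadratic.
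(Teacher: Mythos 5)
Your proposal is correct and follows the same overall strategy as the paper: linearize the convex piece of $u_k$ and the concave piece of $v_k$ in the difference-of-convex decompositions \eqref{ufunk}--\eqref{vfunk}, keeping the other piece intact so that the surrogate inherits the required curvature. The difference lies in how you justify the linearization of the complex quadratic $|\mathbf{g}_k^H\mathbf{q}+\tilde{\mathbf{f}}^H_k\mathbf{p}|^2$. The paper computes its gradient with respect to $\mathbf{p}$ (where the function is analytic) and with respect to $\mathbf{q}$ via Wirtinger calculus, then invokes the first-order Taylor expansion \eqref{Taylor} and the general fact that a concave function lies below its tangent. You instead write $w=\mathbf{g}_k^H\mathbf{q}+\tilde{\mathbf{f}}^H_k\mathbf{p}$ and use the exact identity $|w|^2=|w^{(m)}|^2+2\mathfrak{R}[(w^{(m)})^*(w-w^{(m)})]+|w-w^{(m)}|^2$, dropping the nonnegative remainder. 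This is more elementary: it avoids complex differentiation altogether, makes the direction of the inequality and the tightness at $(\mathbf{p}^{(m)},\mathbf{q}^{(m)})$ manifest, and exploits that the affine-plus-quadratic structure makes the "Taylor expansion" exact up to the dropped square. The paper's route is more general (it would apply to non-quadratic terms) and it explicitly records the gradients, which are reused to verify the gradient-matching property (condition 3 after \eqref{sikim-2}) needed for the MM convergence claim; your identity delivers those same gradients implicitly. Either way the resulting surrogates coincide with the stated $\tilde{u}_k^{(m)}$ and $\tilde{v}_k^{(m)}$, so the proposal is sound.
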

\begin{proof}
Please refer to Appendix \ref{app-new}. 
\end{proof}

Now, we are able to write the surrogate optimization problem in $m$th iteration as 
\begin{subequations}
\begin{align}
 \underset{E^{\prime},\mathbf{p},\mathbf{q}
 }{\text{maximize}}\,\,\,\,\,\,\,\,  & 
  E &\\
 \label{29b-2} \text{s.t.}  \,\,\,\,\,\,\,\, \,\,\,\,\,\,\,\,\,\,\,\,\,\,\,\,&  \tilde{E}_k^{(m)}(\mathbf{p},\mathbf{q})\geq 1+\alpha_kE^{\prime},&k=1,2,
  \\ \label{P-Const-29}& 0\leq p_k\leq P_k,&k=1,2, \\
 \label{k-Const-29} &  |q_k|\leq p_k,
 &k=1,2,
 \end{align}
\label{sikim-2}
\end{subequations}
\!where $\tilde{E}_k^{(m)}(\mathbf{p},\mathbf{q})=\frac{\tilde{u}_k^{(m)}(\mathbf{p},\mathbf{q})}{\tilde{v}_k^{(m)}(\mathbf{p},\mathbf{q})}$ 
 and $E_k(\mathbf{p},\mathbf{q})$ 
 fulfill the following conditions:
\begin{enumerate}
\item $\tilde{E}_k^{(m)}(\mathbf{p},\mathbf{q})\leq E_k(\mathbf{p},\mathbf{q})$ for all feasible $\mathbf{p},\mathbf{q}$ and $k=1,2$.
\item $\tilde{E}_k^{(m)}(\mathbf{p}^{(m)},\mathbf{q}^{(m)})= E_k(\mathbf{p}^{(m)},\mathbf{q}^{(m)})$ for $k=1,2$.
\item $\frac{\partial \tilde{E}_k^{(m)}(\mathbf{p}^{(m)},\mathbf{q}^{(m)})}{\partial\mathbf{p}}=\frac{\partial E_k(\mathbf{p}^{(m)},\mathbf{q}^{(m)})}{\partial\mathbf{p}}$ and $\frac{\partial \tilde{E}^{(m)}_k(\mathbf{p}^{(m)},\mathbf{q}^{(m)})}{\partial\mathbf{q}}=\frac{\partial E_k(\mathbf{p}^{(m)},\mathbf{q}^{(m)})}{\partial\mathbf{q}}$ for $k=1,2$.
\end{enumerate}
These properties guarantee that the algorithm converges to a stationary point of \eqref{enpr} \cite[Section II.B]{aubry2018new}.  To solve \eqref{sikim-2} and obtain $\mathbf{p}^{(m+1)}$ and $\mathbf{q}^{(m+1)}$, we employ the generalized Dinkelbach algorithm, which gives the global optimal solution of \eqref{sikim-2}, as in the following. 
We summarize this procedure in Algorithm I.

\begin{table}
\label{alg-1}
\begin{tabular}{|l|}
\hline \small{\textbf{Algorithm I} Proposed sequential optimization algorithm.}\\
\hline\small{\textbf{Initialization}}\\
\small{Set $\epsilon$, $M$, $\mathbf{p}^{(0)}=\mathbf{0}$, $\mathbf{q}^{(0)}=\mathbf{0}$, $m=1$, convergence=0}\\
\hline \small{\textbf{While} convergence=0 and $m\leq M$ \textbf{do}}\\
\,\,\,\,\,\,\,\small{Construct $\tilde{E}_k^{(m)}(\mathbf{p},\mathbf{q})=\tilde{u}_k^{(m)}(\mathbf{p},\mathbf{q})/\tilde{v}_k^{(m)}(\mathbf{p},\mathbf{q})$ for $k=1,2$ }\\
\,\,\,\,\,\,\,\,\,\,\,\,\,\, \small{using Lemma \ref{uvfunk-app}}\\
\,\,\,\,\,\,\,\small{Obtain $\mathbf{p}^{(m+1)}$ and $\mathbf{q}^{(m+1)}$ by solving \eqref{sikim-2}, i.e.,}\\
\,\,\,\,\,\,\,\,\,\,\,\,\,\, \small{run algorithm II}\\
\,\,\,\,\,\,\,\small{\textbf{If} $\| \mathbf{p}^{(m)}-\mathbf{p}^{(m+1)}\| /\| \mathbf{p}^{(m)}\| <\epsilon$} \\
\,\,\,\,\,\,\,\,\,\,\small{and $\| \mathbf{q}^{(m)}-\mathbf{q}^{(m+1)}\| / \|\mathbf{q}^{(m)}\| <\epsilon$}\\
\,\,\,\,\,\,\,\,\,\,\,\,\,\,\,\,\,\small{convergence=1}\\
\,\,\,\,\,\,\,\,\,\,\,\,\,\,\,\,\,\small{$\mathbf{p}^{\star}=\mathbf{p}^{(m+1)}$ and $\mathbf{q}^{\star}=\mathbf{q}^{(m+1)}$}\\
\,\,\,\,\,\,\,\small{\textbf{End (If)}}\\
\,\,\,\,\,\,\,\small{$m=m+1$}\\
\small{\textbf{End (While)}}\\
\small{{\bf Return} $\mathbf{p}^{\star}$ and $\mathbf{q}^{\star}$.}\\
\hline
\end{tabular}
\end{table}

 Now we solve \eqref{sikim-2} and obtain its global optimal solution by FP, which is also an iterative algorithm as explained in Section \ref{dinkel-alg}. 
 To this end, we introduce the following functions, which are the corresponding surrogate functions of $\frac{\tilde{E}^{(m)}_k-1}{\alpha_k}$ for $k=1,2$:
 \begin{equation}\label{E-tilde}
\hat{E}_k(\mathbf{p},\mathbf{q},\mu^{(l)})\triangleq u_k^{(m)}(\mathbf{p},\mathbf{q})-(\mu^{(l)}\alpha_k+1)v_k^{(m)}(\mathbf{p},\mathbf{q}),
\end{equation}
where $\mu^{(l)} \in\mathbb{R}$ is fixed and given by
 \begin{equation}\label{mu-31}
\mu^{(l)}=\min_{k=1,2}\left(\frac{\tilde{E}_k(\mathbf{p}^{(l-1)},\mathbf{q}^{(l-1)})-1}{\alpha_k}\right).
\end{equation}
It is worth mentioning that the generalized Dinkelbach algorithm requires an initial point $\mu^{(0)}$, which can be obtained by substituting $\mathbf{p}^{(m)}$ and $\mathbf{q}^{(m)}$ in \eqref{mu-31}. 
By substituting  \eqref{E-tilde} in \eqref{sikim-2}, the optimization problem at each iteration of the generalized Dinkelbach algorithm is
\begin{subequations}
\begin{align}
 \underset{E^{\prime},\mathbf{p},\mathbf{q}
 }{\text{maximize}}\,\,\,\,\,\,\,\,  & 
  E^{\prime} &\\
 \label{29b} \text{s.t.}  \,\,\,\,\,\,\,\, \,\,\,\,\,\,\,\,\,\,\,\,\,\,\,\,&  \hat{E}_k(\mathbf{p},\mathbf{q},\mu^{(l)})\geq E^{\prime},&k=1,2,
  \\ & \eqref{P-Const-29},\eqref{k-Const-29}.
 \end{align}
\label{enpr-eq}
\end{subequations}
 We solve \eqref{enpr-eq} for the given $\mu^{(l)}$, which results in $\mathbf{p}^{(l)}$ and $\mathbf{q}^{(l)}$.
 Then, we update $\mu^{(l)}$ by \eqref{mu-31} and repeat the procedure until a convergence metric is met. 
 As indicated in Section \ref{dinkel-alg}, the convergence rate of the generalized Dinkelbach algorithm is linear.
 The optimization problem \eqref{enpr-eq} is convex, and its global optimal solution can be efficiently obtained  \cite{boyd2004convex}. 
 We summarize this procedure in Algorithm II.
 \begin{table}
\label{alg-2}
\begin{tabular}{|l|}
\hline \small{\textbf{Algorithm II} Generalized Dinkelbach algorithm.}\\
\hline\small{\textbf{Initialization}}\\
\small{Set $\epsilon$, $L$,  $l=0$, $\mu^{(l)}=\min_{k=1,2}\left(\frac{\tilde{E}_k(\mathbf{p}^{(m)},\mathbf{q}^{(m)})-1}{\alpha_k}\right)$}\\
\small{Compute $\hat{E}_k(\mathbf{p},\mathbf{q},\mu^{(l)})$ for $k=1,2$ by \eqref{E-tilde}}\\
\hline \small{\textbf{While} $\underset{k=1,2}{\min}\{\hat{E}_k(\mathbf{p},\mathbf{q},\mu^{(l)})\}\geq \epsilon$ and $l\leq L$ \textbf{do}}\\
\,\,\,\,\,\,\,\small{$l=l+1$}\\
\,\,\,\,\,\,\,\small{Obtain $\mathbf{p}^{(l)}$ and $\mathbf{q}^{(l)}$ by solving \eqref{enpr-eq}}\\
\,\,\,\,\,\,\,\small{\textbf{If} $\underset{k=1,2}{\min}\{\tilde{E}_k(\mathbf{p},\mathbf{q},\mu^{(l)})\}<\epsilon$}\\
\,\,\,\,\,\,\,\,\,\,\,\,\,\,\,\,\,\small{$\mathbf{p}^{\star}=\mathbf{p}^{(l)}$ and $\mathbf{q}^{\star}=\mathbf{q}^{(l)}$}\\
\,\,\,\,\,\,\,\small{\textbf{Else}}\\
\,\,\,\,\,\,\,\,\,\,\,\,\,\,\,\,\,\small{Update $\mu^{(l)}$ by \eqref{mu-31}}\\
\,\,\,\,\,\,\,\small{\textbf{End (If)}}\\
\small{\textbf{End (While)}}\\
\small{{\bf Return} $\mathbf{p}^{\star}$ and $\mathbf{q}^{\star}$.}\\
\hline
\end{tabular}
\end{table}

To sum up, the proposed algorithm works as follows. 
We solve the PSINR profile in \eqref{enpr} by solving a sequence of fractional optimization problems.  
Indeed, we employ a sequential optimization approach and approximate the PSINR term of each user by a lower bound. 
In order to derive the global optimal solution of each  fractional optimization problem, 
 we perform another iterative algorithm, i.e., the generalized Dinkelbach algorithm. 
It is worth mentioning that this algorithm does not converge to the Pareto-optimal solution; however, it obtains a stationary point of \eqref{enpr}.

\section{Simplified algorithm}\label{sub-Sec}
In this section, we propose a simplified version of the algorithm from Section \ref{FP}, which exhibits a lower computational complexity. 
In the simplified algorithm, we first optimize the transmission power $\mathbf{p}$ for PGS, i.e., for $\mathbf{q}=\mathbf{0}$. 
This problem is addressed in Section \ref{Sec-PGS}. 
Then, in Section \ref{sub-Sec}.B, we optimize the complementary variances for the resulting transmit power $\mathbf{p}$ such that the rates of all users is simultaneously increased.

\subsection{Power optimization}\label{Sec-PGS}
 In this subsection, we optimize the transmission power vector $\mathbf{p}$ for PGS, i.e., when $\mathbf{q}=\mathbf{0}$. In this case, deriving the boundary of the PSINR region can be cast as the optimization problem
 \begin{subequations}
 \begin{align}
 \underset{E,\mathbf{p}
 }{\text{maximize}}\,\,\,  & 
  E &\\
 \label{39b} \text{s.t.}   \,\,\,\,\,\,\,\,\,\,\,\,\,\,\,\,&  \frac{(\sigma^2+\mathbf{a}_i^T\mathbf{p})^2-|\tilde{\mathbf{f}}_i^H\mathbf{p}|^2}{(\sigma^2+\mathbf{b}_i^T\mathbf{p})^2-|\tilde{\mathbf{f}}^H_i\mathbf{p}|^2}\geq 1+\alpha_k E,
 &\!\!k=1,2,
  \\ \label{P-Const-39}& 0\leq p_k\leq P_k,&\!\!k=1,2,
 \end{align}
\label{enpr-PGS}
\end{subequations}
 for $\alpha_1,\alpha_2\geq 0$ and $\alpha_1+\alpha_2=1$. Unfortunately, the optimization problem in \eqref{enpr-PGS} is not convex due to \eqref{39b}. 
 In the following lemma, we derive a lower bound for \eqref{39b}, which allows us to simplify \eqref{enpr-PGS} and derive a low-complexity algorithm. 
\begin{lemma}\label{E-lower-1}
A lower bound for the left-hand side of  \eqref{39b} is
\begin{equation}\label{lo-en-bo}
 \frac{(\sigma^2+\mathbf{a}_i^T\mathbf{p})^2-|\tilde{\mathbf{f}}_i^H\mathbf{p}|^2}{(\sigma^2+\mathbf{b}_i^T\mathbf{p})^2-|\tilde{\mathbf{f}}^H_i\mathbf{p}|^2}\geq \frac{(\sigma^2+\mathbf{a}_i^T\mathbf{p})^2}{(\sigma^2+\mathbf{b}_i^T\mathbf{p})^2},
\end{equation}
where the equality in \eqref{lo-en-bo} holds if and only if the HWD noise is proper, i.e., $\tilde{\mathbf{f}}_i=\mathbf{0}$.
\end{lemma}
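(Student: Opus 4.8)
The plan is to reduce the claimed inequality to an elementary sign condition by introducing the shorthand $A\triangleq(\sigma^2+\mathbf{a}_i^T\mathbf{p})^2$, $B\triangleq(\sigma^2+\mathbf{b}_i^T\mathbf{p})^2$, and $C\triangleq|\tilde{\mathbf{f}}_i^H\mathbf{p}|^2$, so that \eqref{lo-en-bo} reads $\tfrac{A-C}{B-C}\geq\tfrac{A}{B}$. The quantity $B-C$ is exactly the denominator of $E_i(\mathbf{p},\mathbf{0})$, which the paper has already argued is strictly positive; since $C\geq 0$ this also gives $B\geq B-C>0$, so both $B$ and $B-C$ are positive and cross-multiplication is legitimate. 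First I would compute the gap directly,
\begin{equation*}
\frac{A-C}{B-C}-\frac{A}{B}=\frac{B(A-C)-A(B-C)}{B(B-C)}=\frac{C(A-B)}{B(B-C)},
\end{equation*}
so that the sign of the left-hand side is governed entirely by the product $C(A-B)$.

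Next I would establish the two factors. The factor $C=|\tilde{\mathbf{f}}_i^H\mathbf{p}|^2$ is a squared magnitude and hence nonnegative. For $A-B$ I would use the factorization $A-B=\big((\mathbf{a}_i-\mathbf{b}_i)^T\mathbf{p}\big)\big(2\sigma^2+(\mathbf{a}_i+\mathbf{b}_i)^T\mathbf{p}\big)$ and then read off from the definitions of $\mathbf{a}_i$ and $\mathbf{b}_i$ that $\mathbf{a}_i-\mathbf{b}_i$ equals $|h_{ii}|^2$ in its $i$th entry and $0$ in the other; intuitively the two vectors differ only in that $\mathbf{a}_i$ carries the full desired-signal-plus-distortion term $|h_{ii}|^2(1+\sigma^2_{\eta_{ii}})$, whereas $\mathbf{b}_i$, being the interference-plus-noise vector, retains only the distortion part $|h_{ii}|^2\sigma^2_{\eta_{ii}}$. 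Because $\mathbf{p}\geq\mathbf{0}$ this yields $(\mathbf{a}_i-\mathbf{b}_i)^T\mathbf{p}=|h_{ii}|^2 p_i\geq 0$, and the second factor is strictly positive, so $A-B\geq 0$. Combining $C\geq 0$, $A-B\geq 0$, and $B(B-C)>0$ proves \eqref{lo-en-bo}.

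For the equality characterization I would observe that the gap vanishes precisely when $C(A-B)=0$, i.e. when $C=0$ or $A=B$. The condition $C=0$ holds whenever $\tilde{\mathbf{f}}_i^H\mathbf{p}=0$, and in particular for every feasible $\mathbf{p}$ exactly when $\tilde{\mathbf{f}}_i=\mathbf{0}$, which by \eqref{matf-1} means $\tilde{\sigma}^2_{\eta_{ji}}=0$, i.e. the HWD noise is proper; this gives the stated ``if'' direction and identifies properness as the regime in which the bound is tight. The main point to be careful about is the ``only if'' direction, since $A=B$ also forces equality whenever $p_i=0$ (or $h_{ii}=0$); thus strictly speaking equality is equivalent to $\tilde{\mathbf{f}}_i^H\mathbf{p}=0$ rather than to $\tilde{\mathbf{f}}_i=\mathbf{0}$ in degenerate configurations. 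I would therefore state the conclusion at the nondegenerate operating point $p_i>0$, $h_{ii}\neq 0$, where $A>B$ and the vanishing of the gap is genuinely equivalent to $C=0$, so that equality is attained for all feasible $\mathbf{p}$ only when $\tilde{\mathbf{f}}_i=\mathbf{0}$. This bookkeeping of the degenerate cases, and not the inequality itself, is the only delicate point.
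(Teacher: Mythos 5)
Your proof is correct and follows essentially the same route as the paper's: both reduce the claim to the monotonicity of $(\beta_1-t)/(\beta_2-t)$ in $t$ for $0\le t<\beta_2<\beta_1$ (you compute the gap $\frac{C(A-B)}{B(B-C)}$ directly, while the paper differentiates $f(t)=\frac{\beta_1-t}{\beta_2-t}$, which is the same calculation). Your bookkeeping of the degenerate cases is in fact slightly more careful than the paper's, whose proof actually establishes equality if and only if $t=|\tilde{\mathbf{f}}_i^H\mathbf{p}|^2=0$ and then silently identifies that condition with $\tilde{\mathbf{f}}_i=\mathbf{0}$ --- precisely the imprecision you flag.
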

\begin{proof}
It is easy to verify that $0\leq|\tilde{\mathbf{f}}_i^H\mathbf{p}|^2<(\sigma^2+\mathbf{b}_i^T\mathbf{p})^2<(\sigma^2+\mathbf{a}_i^T\mathbf{p})^2$.
Let us define 
\begin{equation}
f(t)=\frac{\beta_1-t}{\beta_2-t},
\end{equation}
where $0\leq t<\beta_2<\beta_1$. 
The lower bound in \eqref{lo-en-bo} is then satisfied if $f(t)$ is increasing in $t$. This function is strictly increasing in $t\in [0,\beta_2)$ since
\begin{equation}
\frac{\partial f(t)}{\partial t}=\frac{\beta_1-\beta_2}{(\beta_2-t)^2}>0,
\end{equation}
Thus, we have 
\begin{equation}
\frac{\beta_1-t}{\beta_2-t}\geq\frac{\beta_1}{\beta_2},
\end{equation}
with equality if and only if $t=0$. 
\end{proof}
For each point characterized by $\alpha_1$ and $\alpha_2$, we solve \eqref{enpr-PGS} for the lower bound in \eqref{lo-en-bo} as the optimization problem
\begin{subequations}
 \begin{align}
 \underset{E,\mathbf{p}
 }{\text{maximize}}\,\,\,\,\,\,\,\,  & 
  E &\\
 \label{46b} \text{s.t.}  \,\,\,\,\,\,\,\, \,\,\,\,\,\,\,\,\,\,\,\,\,\,\,\,&  \frac{\sigma^2+\mathbf{a}_k^T\mathbf{p}}{\sigma^2+\mathbf{b}_k^T\mathbf{p}}\geq \sqrt{1+\alpha_k E},
 &k=1,2,
  \\ \label{P-Const-39}& 0\leq p_k\leq P_k,&k=1,2.
 \end{align}
\label{enpr-PGS-app}
\end{subequations}
It is worth mentioning that the lower bound in Lemma \ref{E-lower-1} is employed to simplify \eqref{enpr-PGS} and obtain the powers, and the actual rates are derived by substituting the obtained powers in \eqref{R1}. 
Note that the region achieved by solving  \eqref{enpr-PGS} includes the region achieved by solving \eqref{enpr-PGS-app}.
If the additive HWD noise is proper, \eqref{enpr-PGS-app} is equivalent to \eqref{enpr-PGS}\footnote{ 
This is in line with \cite{hellings2017worst}, where it was shown that proper Gaussian noise is the worst case in a $K$-user MIMO IC with ideal devices.}. 
The global optimum solution of \eqref{enpr-PGS-app} can be derived by employing a bisection method and solving a sequence of feasibility problems \cite{aubry2018new}. 
That is, we fix $E$ as $E^{\prime}$ and 
consider the feasibility problem \eqref{enpr-PGS-app-bi}, shown at the top of the next page.
\begin{figure*}
\begin{subequations}
\begin{align}
 \text{find}\,\,\,\,\,\,\,\, \,\,\,&\mathbf{p}\in\mathbb{R}^{2},\\
   \text{s.t.}  \,\,\,\,\,\,\,\,\,\,\,\,\,&  (\mathbf{a}_k^T-\sqrt{1+\alpha_k E^{\prime}}\mathbf{b}_k^T)\mathbf{p}\geq (\sqrt{1+\alpha_k E^{\prime}}-1) \sigma^2,&k=1,2,\label{e-Const-40}
  \\ \label{P-Const-40}& 0\leq p_k\leq P_k,&k=1,2.
 \end{align}
\label{enpr-PGS-app-bi}
\end{subequations}
\setcounter{equation}{33}
\begin{align}
\label{line-slop}
\mathbf{A}&=\left[ \begin{array}{c} \mathbf{a}_1^T-\sqrt{1+\alpha_1 E^{\prime}} \mathbf{b}_1^T \\ \mathbf{a}_2^T-\sqrt{1+\alpha_2 E^{\prime}}\mathbf{b}_2^T \end{array} \right]=\left[ \begin{array}{cc} 
|h_{11}|^2\left(1-\sigma^2_{\eta_{11}}(\sqrt{1+\alpha_1 E^{\prime}}-1)\right)&-|h_{21}|^2(1+\sigma^2_{\eta_{21}})(\sqrt{1+\alpha_1 E^{\prime}}-1)
\\ 
-|h_{12}|^2(1+\sigma^2_{\eta_{12}})(\sqrt{1+\alpha_2 E^{\prime}}-1)&|h_{22}|^2\left(1-\sigma^2_{\eta_{22}}(\sqrt{1+\alpha_2 E^{\prime}}-1)\right)
 \end{array} \right].
\end{align}
\setcounter{equation}{34}
\begin{subequations}
 \begin{align}
 \underset{t,\mathbf{q}}{\text{maximize}}\,\,\,\,\,\,\,\,  & 
  t &\\
 \label{49b} \text{s.t.}  \,\,\,\,\,\,\,\, \,\,\,\,\,\,\,\,\,\,\,\,\,\,\,\,&  \frac{(\sigma^2+\mathbf{a}_k^T\mathbf{p}^{\star})^2-|\mathbf{f}_k^H\mathbf{q}+\tilde{\mathbf{f}}_k^H\mathbf{p}^{\star}|^2}{(\sigma^2+\mathbf{b}_k^T\mathbf{p}^{\star})^2-|\mathbf{g}_k^H\mathbf{q}+\tilde{\mathbf{f}}^H_k\mathbf{p}^{\star}|^2}\geq E_{p,k}+\alpha_kt,&k=1,2,
  \\ \label{ka-Const-49}& |q_k|\leq p_k^{\star},
  &k=1,2.
 \end{align}
\label{enpr-iGS}
\end{subequations}
\setcounter{equation}{37}
\begin{subequations}
 \begin{align}
 \underset{t_1,t_2,\mathbf{q}}{\text{maximize}}\,\,\,\,\,\,\,\,  & 
  \min (t_1,t_2) &\\
 \label{55b} \text{s.t.}  \,\,\,\,\,\,\,\, \,\,\,\,\,\,\,\,\,\,\,\,\,\,\,\,&  \frac{(\sigma^2+\mathbf{a}_k^T\mathbf{p}^{\star})^2-|\mathbf{f}_k^H\mathbf{q}+\tilde{\mathbf{f}}_k^H\mathbf{p}^{\star}|^2-\alpha_k t_k}{(\sigma^2+\mathbf{b}_k^T\mathbf{p}^{\star})^2-|\mathbf{g}_k^H\mathbf{q}+\tilde{\mathbf{f}}^H_k\mathbf{p}^{\star}|^2}\geq E_{p,k},&k=1,2,
  \\ \label{ka-Const-59}& |q_k|\leq p_k^{\star},
  &k=1,2.
 \end{align}
\label{enpr-iGS-2}
\end{subequations}
\hrulefill
\end{figure*}
If \eqref{enpr-PGS-app-bi} is feasible for a given $E^{\prime}$, the optimal solution of \eqref{enpr-PGS-app} is greater than or equal to $E^{\prime}$, i.e., $E^{\star}\geq E^{\prime}$. Otherwise, $E^{\star}< E^{\prime}$.
In order to find $E^{\star}$, we employ the well-known bisection method over $E^{\prime}$ solving \eqref{enpr-PGS-app-bi} at each iteration, which yields, upon convergence, the global optimal solution of \eqref{enpr-PGS-app} \cite{boyd2004convex}. 
Constraints \eqref{e-Const-40} and \eqref{P-Const-40} are linear in $\mathbf{p}$, which permits deriving a closed-form expression for a feasible point, as presented in the following theorem. 
It is worth mentioning that this algorithm does not attain the global optimal solution of \eqref{enpr-PGS}. 
There might be optimization approaches to obtain its global optimal solution such as the monotonic optimization framework \cite{rubinov2001alogrithm, liu2012achieving, zappone2017globally}, although the computational complexity of these approaches is high.
\begin{theorem}\label{E-lower}
The optimization problem in \eqref{enpr-PGS-app-bi} is feasible for a given $E^{\prime}$ if and only if 
$0\leq p_k^{\prime}\leq P_k$, for $k=1,2$, where 
\setcounter{equation}{32}
\begin{align}
\label{int-sec-pnt}\left[ \begin{array}{c} p_1^{\prime} \\ p_2^{\prime} \end{array} \right] &=\mathbf{A}^{-1}\left[ \begin{array}{c}
(\sqrt{1+\alpha_1 E^{\prime}}-1)\sigma^2 \\ (\sqrt{1+\alpha_2 E^{\prime}}-1)\sigma^2 \end{array} \right].
\end{align}
Moreover, $\mathbf{A}$ is given by \eqref{line-slop}, shown at the top of the next page.
\end{theorem}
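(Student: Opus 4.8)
The plan is to read \eqref{enpr-PGS-app-bi} as a two-dimensional linear feasibility problem. Stacking the two constraints in \eqref{e-Const-40} as $\mathbf{A}\mathbf{p}\geq\mathbf{c}$, with $\mathbf{A}$ the matrix in \eqref{line-slop} and $\mathbf{c}=[(\sqrt{1+\alpha_1 E^{\prime}}-1)\sigma^2,\,(\sqrt{1+\alpha_2 E^{\prime}}-1)\sigma^2]^T$, the point $\mathbf{p}^{\prime}$ in \eqref{int-sec-pnt} is exactly the unique solution of $\mathbf{A}\mathbf{p}^{\prime}=\mathbf{c}$, i.e., the intersection of the two constraint boundaries. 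Note first that $\mathbf{c}\geq\mathbf{0}$ componentwise, since $\alpha_k E^{\prime}\geq 0$ forces $\sqrt{1+\alpha_k E^{\prime}}\geq 1$. The easy direction is then immediate: if $0\leq p_k^{\prime}\leq P_k$ for $k=1,2$, then $\mathbf{p}^{\prime}$ satisfies the box constraint \eqref{P-Const-40} and, by construction, $\mathbf{A}\mathbf{p}^{\prime}=\mathbf{c}$, so \eqref{e-Const-40} holds with equality and \eqref{enpr-PGS-app-bi} is feasible.

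The crux is the converse, and I would obtain it by establishing that $\mathbf{A}$ is a nonsingular M-matrix, equivalently that $\mathbf{A}^{-1}$ is entrywise nonnegative. Inspecting \eqref{line-slop}, each off-diagonal entry equals $-|h_{jk}|^2(1+\sigma^2_{\eta_{jk}})(\sqrt{1+\alpha_k E^{\prime}}-1)\leq 0$, so $\mathbf{A}$ is a Z-matrix; it then remains to certify that its diagonal entries and its determinant are positive. Granting $\mathbf{A}^{-1}\geq\mathbf{0}$, two consequences follow. First, $\mathbf{p}^{\prime}=\mathbf{A}^{-1}\mathbf{c}\geq\mathbf{0}$ holds automatically because $\mathbf{c}\geq\mathbf{0}$. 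Second, for any feasible $\mathbf{p}$ we have $\mathbf{A}(\mathbf{p}-\mathbf{p}^{\prime})\geq\mathbf{0}$, hence $\mathbf{p}-\mathbf{p}^{\prime}=\mathbf{A}^{-1}\mathbf{A}(\mathbf{p}-\mathbf{p}^{\prime})\geq\mathbf{0}$; that is, every $\mathbf{p}$ satisfying \eqref{e-Const-40} dominates $\mathbf{p}^{\prime}$ componentwise, so $\mathbf{p}^{\prime}$ is the componentwise-minimal element of the polyhedron cut out by \eqref{e-Const-40}. Therefore, if \eqref{enpr-PGS-app-bi} is feasible with witness $\mathbf{p}$, then $\mathbf{0}\leq\mathbf{p}^{\prime}\leq\mathbf{p}\leq[P_1,P_2]^T$ componentwise, which gives $0\leq p_k^{\prime}\leq P_k$ and closes the equivalence.

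I expect the main obstacle to be the sign analysis certifying $\mathbf{A}^{-1}\geq\mathbf{0}$ (which also guarantees $\mathbf{A}$ invertible, so that $\mathbf{p}^{\prime}$ is well defined). The nonpositivity of the off-diagonals is immediate, but one must argue that the diagonal entries $|h_{kk}|^2\big(1-\sigma^2_{\eta_{kk}}(\sqrt{1+\alpha_k E^{\prime}}-1)\big)$ and the determinant $\det\mathbf{A}=A_{11}A_{22}-|A_{12}||A_{21}|$ stay positive, i.e., that $A_{11}A_{22}>|A_{12}||A_{21}|$, over the admissible range of $E^{\prime}$. This is where the physical assumptions enter (moderate HWD variances and direct links stronger than the interference terms), and it is the only non-mechanical part of the proof; the remainder is the standard inverse-nonnegative comparison above.
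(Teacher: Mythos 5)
Your core argument coincides with the paper's. Both identify $\mathbf{p}^{\prime}=\mathbf{A}^{-1}\mathbf{y}$ (with $\mathbf{y}$ the right-hand side of \eqref{e-Const-40}) as the componentwise-minimal point of the polyhedron cut out by \eqref{e-Const-40}, and both reduce feasibility of \eqref{enpr-PGS-app-bi} to the box check $0\leq p_k^{\prime}\leq P_k$. What you package as inverse-nonnegativity of an M-matrix, the paper carries out by hand in two dimensions: it multiplies the two row inequalities by the adjugate entries (its ``decoupling'' step) to obtain $\det(\mathbf{A})\,p_k\geq(\text{adjugate row})\cdot\mathbf{y}$, which is precisely your comparison $\mathbf{p}-\mathbf{p}^{\prime}=\mathbf{A}^{-1}\mathbf{A}(\mathbf{p}-\mathbf{p}^{\prime})\geq\mathbf{0}$.

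The one point where you diverge is also where your write-up has a genuine gap: the sign conditions $A_{11},A_{22}>0$ and $\det\mathbf{A}>0$ that make $\mathbf{A}$ an M-matrix. You propose to certify them from ``physical assumptions'' (moderate HWD variances, dominant direct links). No such hypotheses appear in the paper, and the theorem is an unconditional equivalence, so this cannot be how the argument closes; as written, your ``only if'' direction is incomplete whenever those conditions fail. The paper closes it differently, by showing that in the degenerate cases the feasibility problem is simply infeasible, so the equivalence survives. Concretely: the off-diagonal entries of $\mathbf{A}$ are nonpositive and $\mathbf{y}\geq\mathbf{0}$, so if a diagonal entry of $\mathbf{A}$ is nonpositive the corresponding row of \eqref{e-Const-40} admits no nonnegative solution at all; and if the diagonals are positive but $\det\mathbf{A}<0$, the decoupled inequalities $\det(\mathbf{A})\,p_k\geq(\text{nonnegative quantity})$ cannot be met by $p_k\geq 0$ (and correspondingly $p_k^{\prime}$ comes out negative, so the stated test also reports infeasibility). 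Replacing your appeal to physical assumptions with this case analysis makes your proof complete and essentially identical to the paper's.
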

\begin{proof}
Please refer to Appendix \ref{app:th-1}.
\end{proof}
We note that this algorithm leads to the optimal PGS only when HWD noise is proper. 
Note that PGS is suboptimal, in {\em point-to-point} communications, in the presence of asymmetric HWD \cite{javed2017asymmetric,javed2017impact}. 
Thus, the users may improve the performance by employing IGS in additive asymmetric HWD. 
It is worth noting that, in this paper, we aim at proposing PGS and IGS schemes for the two-user IC with additive asymmetric HWD, but we do not derive sufficient and necessary conditions for the optimality of IGS or PGS in the two-user IC with additive asymmetric HWD, which remains an open problem.

\subsection{Complementary variance design}\label{Sec-cvd}

In this subsection, we optimize the complementary variances $\mathbf{q}$ for a given $\mathbf{p}^{\star}$, which has been obtained by solving \eqref{enpr-PGS-app}. 
We obtain $\mathbf{q}$ such that the rates of both users exceed the rates achieved by PGS, which are the rates achievable with $\mathbf{q}=\mathbf{0}$ and the power vector $\mathbf{p}^\star$ obtained by solving \eqref{enpr-PGS-app}.
 In other words, we want to solve the optimization problem \eqref{enpr-iGS}, shown at the top of this page,
where $p_k^{\star}$ is the $k$th element of $\mathbf{p}^{\star}$. Moreover, $E_{p,k}$ is fixed and given by
\setcounter{equation}{35}
\begin{equation}
E_{p,k}=\frac{(\sigma^2+\mathbf{a}_k^T\mathbf{p}^{\star})^2-|\tilde{\mathbf{f}}_k^H\mathbf{p}^{\star}|^2}{(\sigma^2+\mathbf{b}_k^T\mathbf{p}^{\star})^2-|\tilde{\mathbf{f}}^H_k\mathbf{p}^{\star}|^2}.
\end{equation}
Unfortunately, \eqref{enpr-iGS} is not convex due to \eqref{49b}. Hence, in order to efficiently solve \eqref{enpr-iGS}, we first rewrite \eqref{49b} as 
\begin{equation}
\frac{(\sigma^2+\mathbf{a}_k^T\mathbf{p}^{\star})^2-|\mathbf{f}_k^H\mathbf{q}+\tilde{\mathbf{f}}_k^H\mathbf{p}^{\star}|^2-\alpha_k t_k}{(\sigma^2+\mathbf{b}_k^T\mathbf{p}^{\star})^2-|\mathbf{g}_k^H\mathbf{q}+\tilde{\mathbf{f}}^H_k\mathbf{p}^{\star}|^2}\geq E_{p,k},
\end{equation}
where $t_k=t\left[(\sigma^2+\mathbf{b}_k^T\mathbf{p}^{\star})^2-|\mathbf{g}_k^H\mathbf{q}+\tilde{\mathbf{f}}^H_k\mathbf{p}^{\star}|^2\right]$.
We then relax the relation between $t_1$, $t_2$, and $\mathbf{q}$ and treat $t_1$ and $t_2$ as new optimization variables. In other words, we approximate \eqref{enpr-iGS} as \eqref{enpr-iGS-2}, shown at the top of this page.
If $\min (t_1,t_2)>0$, 
the rates of both users are simultaneously increased by employing IGS. 
Otherwise, 
we set $\mathbf{q}=\mathbf{0}$ and employ PGS. 
Note that the constraint \eqref{55b} can be rewritten as
\setcounter{equation}{38}
\begin{multline}
E_{p,k}|\mathbf{g}_k^H\mathbf{q}+\tilde{\mathbf{f}}^H_k\mathbf{p}^{\star}|^2-|\mathbf{f}_k^H\mathbf{q}+\tilde{\mathbf{f}}_k^H\mathbf{p}^{\star}|^2\\+(\sigma^2+\mathbf{a}_k^T\mathbf{p}^{\star})^2-E_{p,k}(\sigma^2+\mathbf{b}_k^T\mathbf{p}^{\star})^2\geq \alpha_kt_k,\label{59b-app}
\end{multline}
which is a difference of two convex functions.
Thus, \eqref{enpr-iGS-2} is not a convex optimization problem, but it can be efficiently solved by  difference of convex programming and a convex-concave procedure similar to \eqref{enpr-eq}  \cite{duchi2018sequential,lipp2016variations,yuille2003concave,lanckriet2009convergence,sun2017majorization}.  Hence, we employ difference of convex programming (DCP) and solve \eqref{enpr-iGS-2} iteratively. At each iteration, we approximate the left-hand side of \eqref{59b-app} by a concave function. 
To this end, we employ the first-order Taylor expansion and approximate the convex part of \eqref{59b-app} around the point $\mathbf{q}^{(l)}$ by an affine function as
\begin{multline}
|\mathbf{g}_k^H\mathbf{q}+\tilde{\mathbf{f}}^H_k\mathbf{p}^{\star}|^2\simeq |\mathbf{g}_k^H\mathbf{q}^{(l)}+\tilde{\mathbf{f}}^H_k\mathbf{p}^{\star}|^2 \\
+2\mathfrak{R}\left((\mathbf{g}_k^H\mathbf{q}^{(l)}+\tilde{\mathbf{f}}^H_k\mathbf{p}^{\star})^*\mathbf{g}_k^H(\mathbf{q}-\mathbf{q}^{(l)})\right),\label{app-sub}
\end{multline}
where $\mathbf{q}^{(l)}$ contains the complementary variances of the users in the $l$th iteration.
It is worth mentioning that $|\mathbf{g}_i^H\mathbf{q}+\tilde{\mathbf{f}}^H_i\mathbf{p}^{\star}|^2$ is always greater than or equal to the right-hand side of \eqref{app-sub}, and consequently, no trust region is required in DCP 
\cite{duchi2018sequential,lipp2016variations,yuille2003concave}.
Finally, in the $l$th each iteration, \eqref{59b-app} can be approximated by
\begin{multline}\label{59b-app-2}
-|\mathbf{f}_k^H\mathbf{q}+\tilde{\mathbf{f}}_k^H\mathbf{p}^{\star}|^2+E_{p,k}|\mathbf{g}_k^H\mathbf{q}^{(l)}+\tilde{\mathbf{f}}^H_k\mathbf{p}^{\star}|^2 \\+2E_{p,k}\mathfrak{R}\left((\mathbf{g}_k^H\mathbf{q}^{(l)}+\tilde{\mathbf{f}}^H_k\mathbf{p}^{\star})^H\mathbf{g}_k^H(\mathbf{q}-\mathbf{q}^{(l)})\right)\\+(\sigma^2+\mathbf{a}_k^T\mathbf{p}^{\star})^2-E_{p,k}(\sigma^2+\mathbf{b}_k^T\mathbf{p}^{\star})^2\geq \alpha_kt_k.
\end{multline}
Finally, the convex optimization problem in the $l$th iteration is 
\begin{subequations}
 \begin{align}
 \underset{t_1,t_2,\mathbf{q}}{\text{maximize}}\,\,\,\,\,\,\,\,  & 
  \min (t_1,t_2) \\
 \label{59b} \text{s.t.}  \,\,\,\,\,\,\,\, \,\,\,\,\,\,\,\,\,\,\,\,\,\,\,\,&  \eqref{59b-app-2},\eqref{ka-Const-59}.
 \end{align}
\label{ensub-itr2}
\end{subequations}
This problem can be easily solved by standard numerical tools \cite{boyd2004convex}. Moreover, the proposed DCP algorithm converges to a stationary point of \eqref{enpr-iGS-2}  \cite{duchi2018sequential,lipp2016variations,yuille2003concave,sun2017majorization,lanckriet2009convergence}.
It is worth mentioning that a stationary point of \eqref{enpr-iGS-2}  is not necessarily a stationary point of \eqref{enpr-iGS}.

The proposed simplified algorithm 
can be summarized as follows. 
The joint optimization problem for $\mathbf{p}$ and $\mathbf{q}$ is decoupled into two separate optimization problems.
We derive the transmission powers by employing the well-known bisection method, which results, in each iteration, in a feasibility problem that has a closed-form solution. 
Then, we employ the DCP algorithm to derive the complementary variances for the given transmission powers.

\section{Numerical Results}\label{secIV}
In this section, we present some numerical results to illustrate our findings. 
For all examples, we set $\sigma^2=1$,  $P_1=P_2=P$, $\epsilon=10^{-4}$, and $L=M=20$, where $\epsilon$, $L$, and $M$ are, respectively, the threshold for convergence, and the maximum number of iterations for Algorithms I and II. 
Moreover, the maximum number of iterations for the algorithm in Section \ref{Sec-cvd} is 40. 
We also define the signal-to-noise ratio (SNR) as the ratio of the power budget to $\sigma^2$, i.e., SNR$=\frac{P}{\sigma^2}$.
We compare our proposed algorithms with PGS and the joint variance and complementary variance optimization algorithm in   \cite{zeng2013transmit} for IGS, which is designed for ideal devices. 
To the best of our knowledge, there exists no PGS algorithm for additive asymmetric HWD in the literature. Because of that, we optimize the PGS scheme by using the first step of our simplified algorithm (see Section \ref{Sec-PGS}).
In the figures, we use the following labels: 
\begin{itemize}
\item {\bf S-IGS}: our proposed simplified  design in Section IV,
\item {\bf FP-IGS}: our proposed design with FP  in Section III,
\item {\bf PGS}: the proposed PGS design in Section IV-A,
\item {\bf I-IGS}: the joint variance and complementary variance IGS design  in \cite{zeng2013transmit} for ideal devices, 
\item {\bf S-TS}: our proposed design in Section IV with time sharing,
\item {\bf F-TS}: our proposed design in Section III with time sharing,
\item {\bf P-TS}: the proposed PGS design in Section IV-A with time sharing. 
\end{itemize}

\subsection{Ideal devices}
In this subsection, we compare the performance of our proposed algorithms with the joint variance and covariance IGS algorithm in \cite{zeng2013transmit} when there is no HWD. 
\begin{figure}[t]
\centering
 \includegraphics[width=0.4\textwidth]{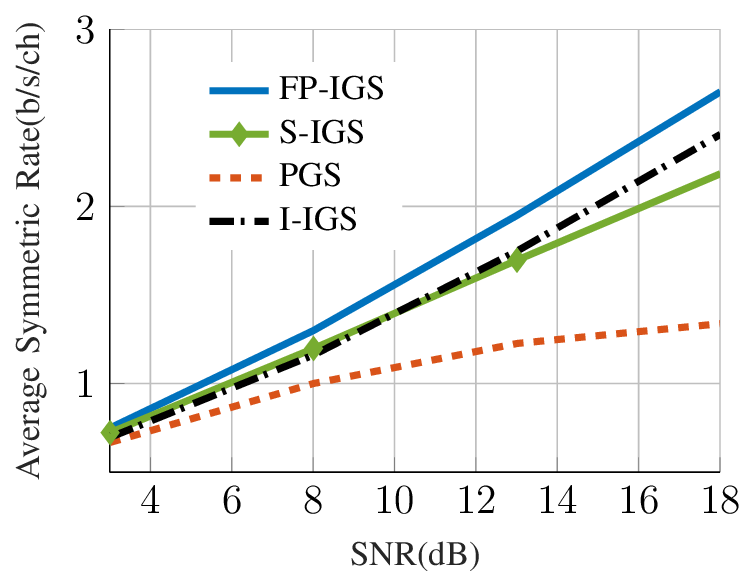}
\caption{Average symmetric rate 
for ideal devices versus  the SNR.}
\label{snr-ideal}
\end{figure}
In Fig. \ref{snr-ideal}, we show the average symmetric rate, i.e., the minimum rate allocated to the users, 
which is the fairness point of the rate region boundary and obtained by $\alpha_1=\alpha_2=0.5$. 
We average the results over 100 channel realizations, where each channel realization is taken from a complex proper Gaussian distribution with variance 1, i.e., $\mathcal{CN}(0,1,0)$.
As  can be observed, our proposed algorithm based on FP outperforms the proposed algorithm in \cite{zeng2013transmit}, especially at 
high  SNR. 
Our simplified algorithm performs similarly to the proposed algorithm in  \cite{zeng2013transmit} for low SNR. However, the algorithm in  \cite{zeng2013transmit} performs better than the simplified algorithm in the moderate SNR regime. 
The reason is that the benefit of employing IGS increases with SNR. 
Thus, the performance differences of the IGS algorithms are clearer at higher SNR.

\begin{figure*}[t!]
    \centering
    \begin{subfigure}[t]{0.5\textwidth}
        \centering
        \includegraphics[width=.8\textwidth]{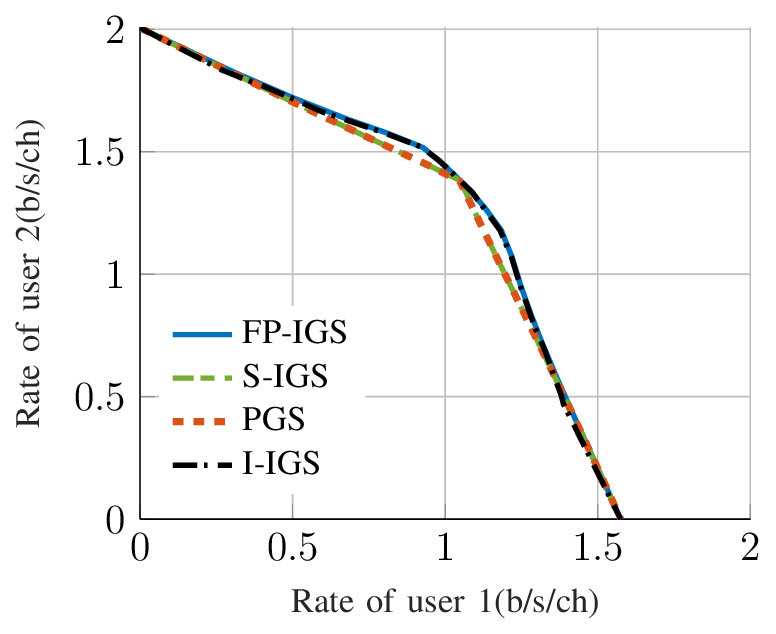}
        \caption{$P=1$ (SNR$=0\,$dB)}
    \end{subfigure}%
    ~ 
    \begin{subfigure}[t]{0.5\textwidth}
        \centering
        \includegraphics[width=.75\textwidth]{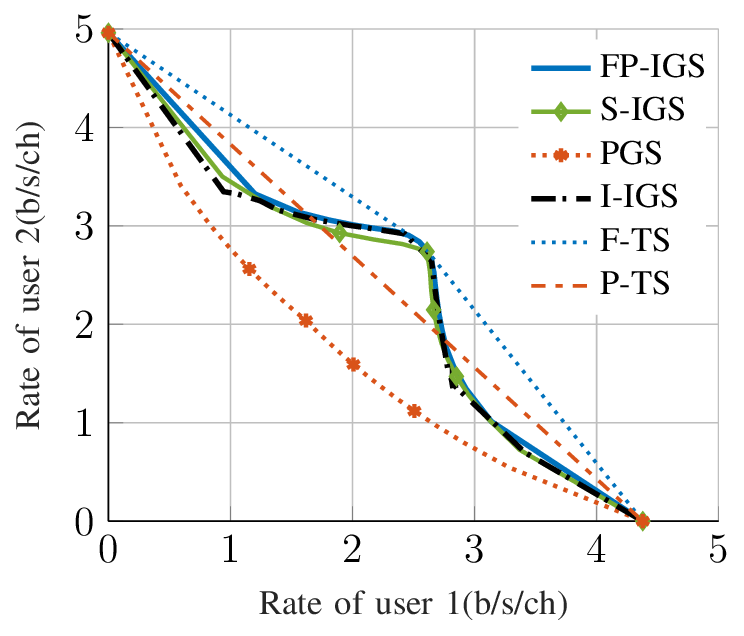}
        \caption{$P=10$ (SNR$=10\,$dB)}
    \end{subfigure}
    \caption{Achievable rate region for ideal devices and channel realization $\mathbf{H}_1$ in \eqref{h-1-ch}.}
	\label{Fig-RR-21}
\end{figure*}
In Fig. \ref{Fig-RR-21}, we also provide rate region examples for  
ideal devices and 
the channel realization
\begin{equation}\label{h-1-ch}
\mathbf{H}_1=\left[ \begin{array}{cc}
1.4070e^{i0.2721} & 0.9288e^{i1.8320 } \\ 0.9288e^{i1.8320 }  & 1.7367e^{i1.1136} \end{array} \right],
\end{equation}
where $[\mathbf{H}_1]_{ij}=h_{ij}$ for $i,j\in\{1,2\}$. 
As can be observed, IGS can enlarge the achievable rate region for this channel realization and $P=10$. 
Since the benefits of IGS are minor for low SNR, IGS does not provide any gain for $P=1$. 
This is also in line with the averaged results in Fig.  \ref{snr-ideal}, where IGS has minor benefits at low SNR, 
while it improves the performance of the system significantly at moderate SNR.
For this channel realization, our proposed algorithms and the algorithm in \cite{zeng2013transmit} perform very closely to each other. 
In Fig. \ref{Fig-RR-21}b, we also consider the effect of time sharing\footnote{
We derive the achievable rate region with TS by taking the convex hull operation over the corresponding achievable rate regions \cite{zeng2013transmit}. 
It is worth mentioning that time sharing results in the convex hull operation when power constraint is considered for each operational point. 
The achievable rate region with time sharing might be enlarged if an average power constraint over different operational point is considered  \cite{hellings2018improper}. However, this analysis is outside of the scope of this paper.  
} on the achievable rate region. As can be observed, IGS with time sharing outperforms PGS with time sharing for this example. Since the IGS designs perform similarly, for this example, we provide only the time sharing for our proposed IGS design in Section III.

The joint variance and covariance IGS algorithm in \cite{zeng2013transmit} is an iterative algorithm, based on a bisection method over the minimum weighted rates of users, and is proposed for {\em ideal devices}.
The algorithm employs semidefinite relaxation (SDR) programming in order to solve the corresponding feasibility problem at each iteration of the bisection method.  
Since the solution of the SDR in  \cite{zeng2013transmit} is not ensured to be rank-one, it does not necessarily obtain a valid solution, and a Gaussian  randomization procedure is employed to obtain a rank-one solution.  
The solution obtained by the randomization procedure is not ensured to fulfill any optimality condition, which is in contrast with our proposed algorithm, which converges to a stationary point of \eqref{enpr}.
That may be the reason why our algorithm provides a better average symmetric  rate than SDR for high SNR in this scenario.
It is also worth mentioning that our proposed algorithms are more general since they consider additive asymmetric HWD, while the algorithm in \cite{zeng2013transmit} can only be applied for ideal devices.

\begin{figure*}[t!]
    \centering
    \begin{subfigure}[t]{0.5\textwidth}
        \centering
\includegraphics[width=.8\textwidth]{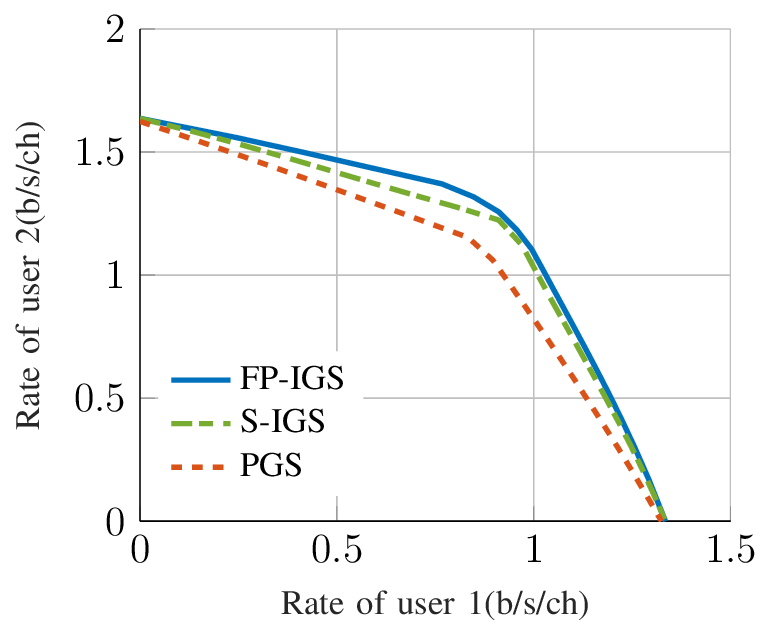}
        \caption{$\sigma^2_{\eta}=0.2$}
    \end{subfigure}%
    ~ 
    \begin{subfigure}[t]{0.5\textwidth}
        \centering
\includegraphics[width=.8\textwidth]{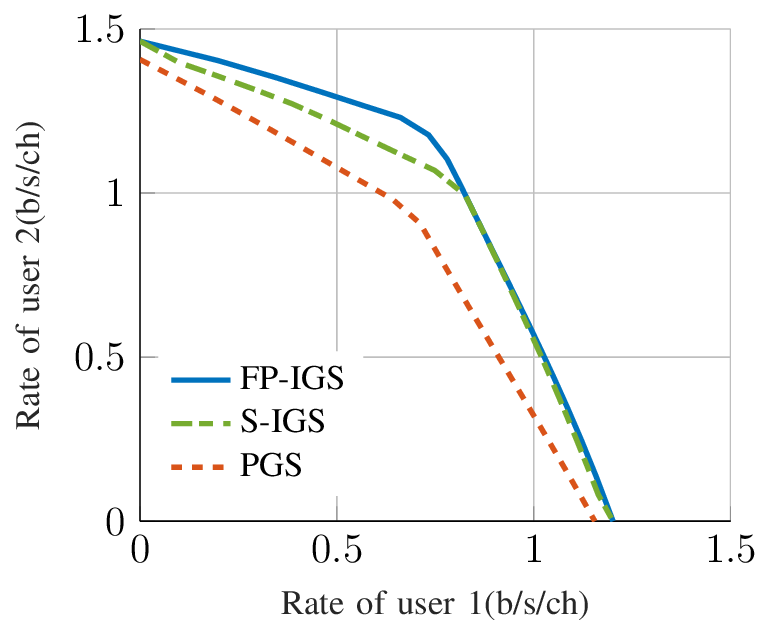}
        \caption{$\sigma^2_{\eta}=0.5$}
    \end{subfigure}
    \caption{Achievable rate region for $|\tilde{\sigma}_{\eta}^2|=\sigma^2_{\eta}$, $P=1$, 
    and channel realization $\mathbf{H}_1$ in \eqref{h-1-ch}.}
	\label{Fig-5-rr}
\end{figure*}
\subsection{Non-ideal devices}
In this subsection, we consider the effect of HWD on the performance of the two-user IC. 
Throughout this subsection, we consider the same statistics for HWD in all devices. 
In Fig. \ref{Fig-5-rr}, we show the rate region for $\mathbf{H}_1$ and $P=1$ under maximally improper HWD\footnote{Maximally improper HWD happens when the in-phase and quadrature-phase noises are completely correlated \cite{javed2018improper}.} noise. 
As shown in Fig. \ref{Fig-RR-21}a, IGS brings negligible gains when the transceivers are ideal, but, as observed in Fig. \ref{Fig-5-rr}, IGS can significantly enlarge the rate region if there is additive asymmetric HWD.
Note that even in point-to-point communications, PGS is in general suboptimal for asymmetric HWD, as it is shown in Fig. \ref{Fig-5-rr} for either $R_1=0$ or $R_2=0$.

\begin{figure*}[t!]
    \centering
    \begin{subfigure}[t]{0.5\textwidth}
        \centering
\includegraphics[width=.8\textwidth]{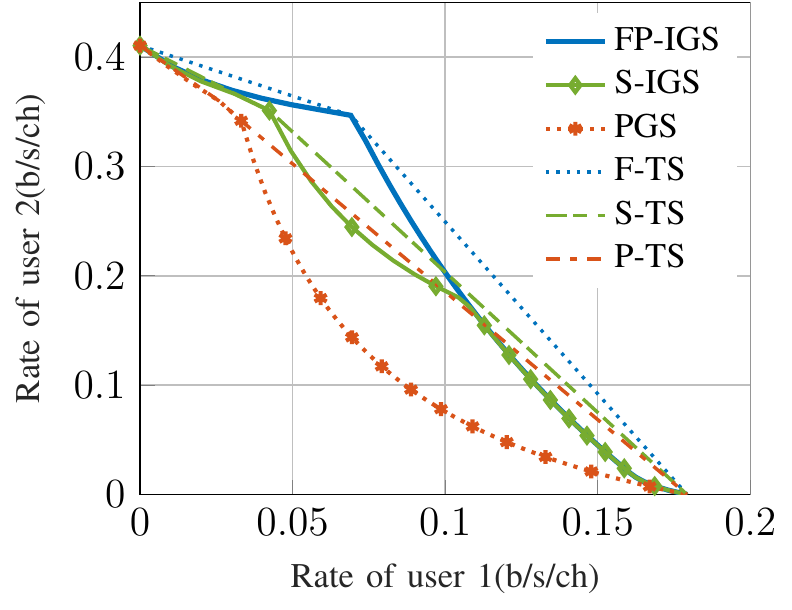}
        \caption{$\sigma^2_{\eta}=0.5$}
    \end{subfigure}%
    ~ 
    \begin{subfigure}[t]{0.5\textwidth}
        \centering
\includegraphics[width=.8\textwidth]{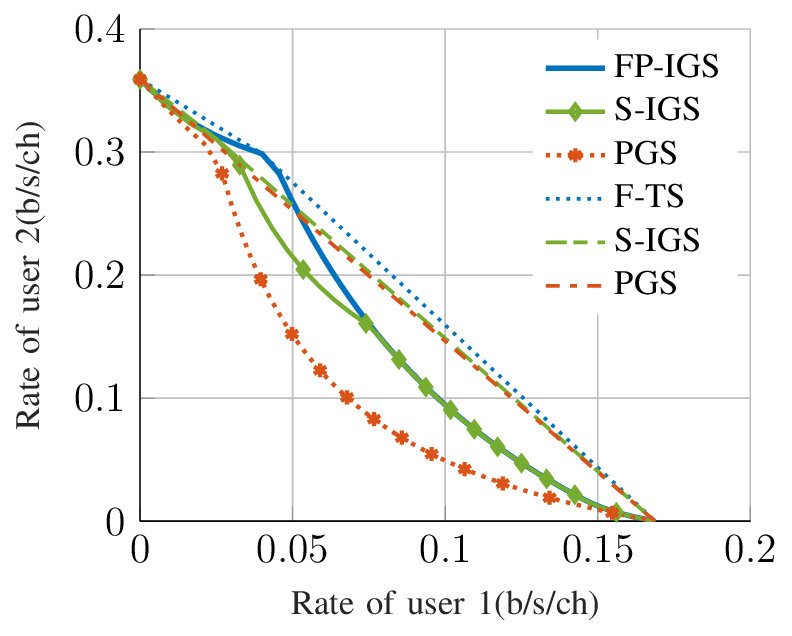}
        \caption{$\sigma^2_{\eta}=1$}
    \end{subfigure}
    \caption{Achievable rate region for $\tilde{\sigma}_{\eta}^2=0$, $P=1$, 
    and channel realization $\mathbf{H}_2$ in \eqref{h-2-ch}.}
	\label{Fig-RR-p-2}
\end{figure*}
In Fig. \ref{Fig-RR-p-2}, we show the achievable rate region for $\tilde{\sigma}_{\eta}^2=0$, $P=1$ (SNR$=0\,$dB),  
and channel realization
\begin{equation}\label{h-2-ch}
\mathbf{H}_2=\left[ \begin{array}{cc}
0.3764 e^{i1.4381} & 0.4029e^{i0.9486} \\ 1.8542e^{i2.8153}  & 0.6277e^{i2.3697} \end{array} \right].
\end{equation}
We take $\tilde{\sigma}^2_{\eta}=0$, i.e., symmetric (proper) HWD. We can observe that IGS enlarges the rate region even for proper  HWD with high noise variance, i.e., $\sigma^2_{\eta}=0.5$ and $\sigma^2_{\eta}=1$. 
It is worth mentioning that the PGS design is Pareto-optimal in the presence of additive symmetric HWD. 
As can be observed, our IGS design in Section III with time sharing outperforms the Pareto-optimal PGS with time sharing for these examples.  

In the following, we provide some averaged results for different parameters to illustrate different aspects of employing IGS.
Similar to Fig. \ref{snr-ideal}, we average the results over 100 channel realizations, 
where each channel realization is taken from a complex proper Gaussian distribution with variance 1, i.e., $\mathcal{CN}(0,1,0)$.

\begin{figure*}[t!]
    \centering
    \begin{subfigure}[t]{0.5\textwidth}
        \centering
\includegraphics[width=.8\textwidth]{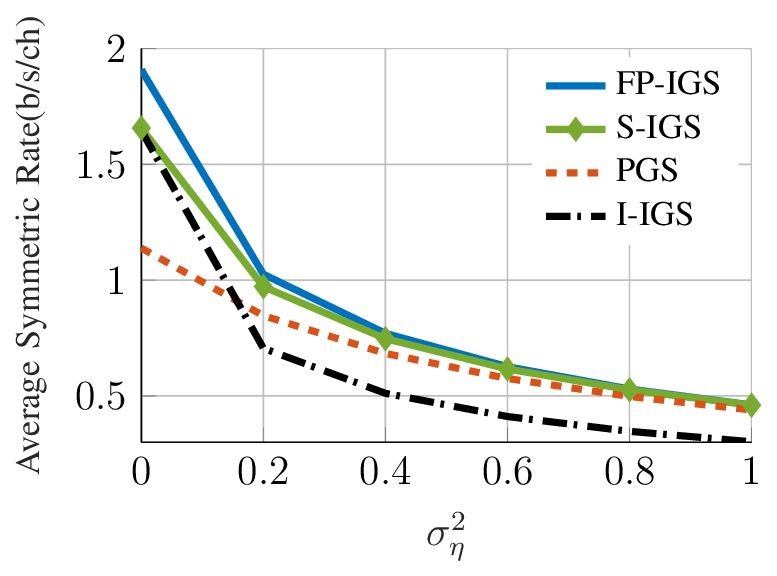}
        \caption{$\tilde{\sigma}^2_{\eta}=0$}
    \end{subfigure}%
    ~ 
    \begin{subfigure}[t]{0.5\textwidth}
        \centering
\includegraphics[width=.8\textwidth]{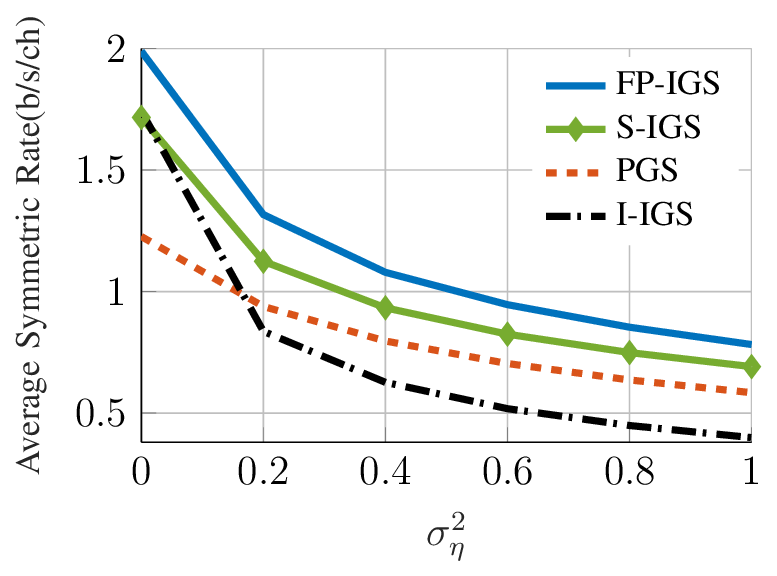}
        \caption{$\tilde{\sigma}^2_{\eta}=\sigma^2_{\eta}$.}
    \end{subfigure}
    \caption{Average symmetric rate 
versus the variance of  the HWD noise for  $P=20$.}
	\label{Fig2}
\end{figure*}
In Fig. \ref{Fig2}, we consider the effect of the variance of the HWD noise on the average symmetric rate of users ($\alpha_1=\alpha_2=0.5$) for  
$P=20$. 
In this figure, we consider proper ($\tilde{\sigma}^2_{\eta}=0$) and maximally improper ($\tilde{\sigma}^2_{\eta}=\sigma^2_{\eta}$)  HWD noise. 
We observe that our proposed algorithm with FP outperforms the other algorithms for maximally improper HWD noise. 
Moreover, in Fig. \ref{Fig2}a, our proposed IGS algorithms perform better than PGS for proper HWD noise with different variances, which is Pareto-optimal PGS in this case. 
Furthermore, our simplified algorithm outperforms the IGS algorithm in \cite{zeng2013transmit} in the presence of HWD. 
However, the performance improvement by our algorithms is minor for proper HWD with high noise variance, where our algorithms only provide $5\%$ improvement over PGS when $\sigma^2_{\eta}=1$ for this example.

\begin{figure*}[t!]
    \centering
    \begin{subfigure}[t]{0.5\textwidth}
        \centering
\includegraphics[width=.8\textwidth]{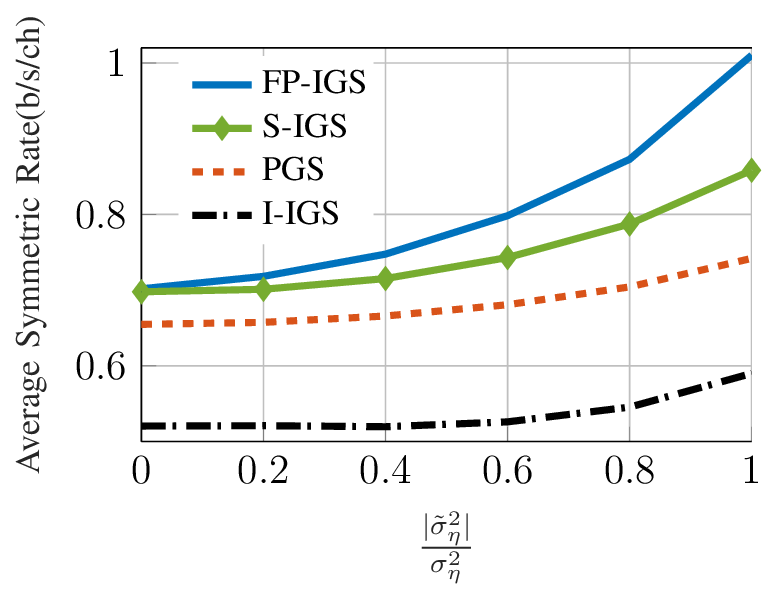}
        \caption{$\sigma_{\eta}^2=0.5$}
    \end{subfigure}%
    ~ 
    \begin{subfigure}[t]{0.5\textwidth}
        \centering
\includegraphics[width=.8\textwidth]{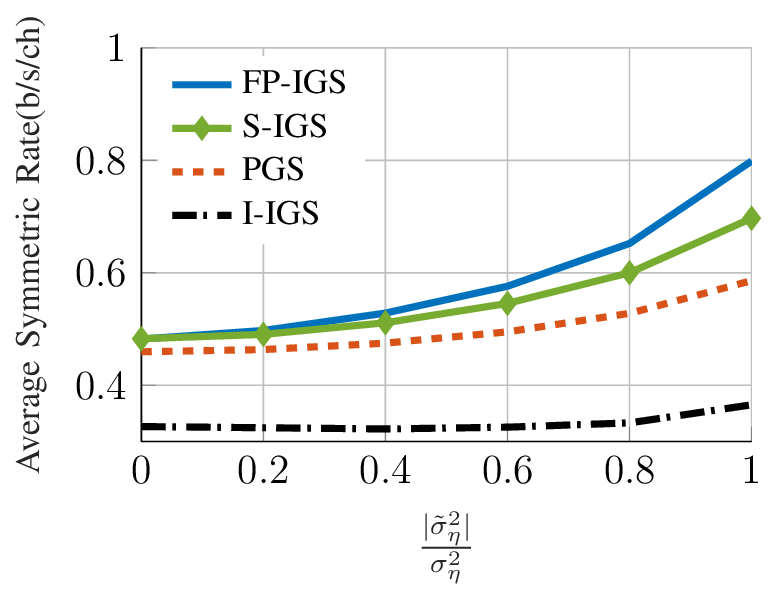}
        \caption{$\sigma_{\eta}^2=1$}
    \end{subfigure}
    \caption{Average symmetric rates 
    versus  the circularity coefficient of HWD noise for 
    $P=20$.}
	\label{Fig3}
\end{figure*}
Figure \ref{Fig3} shows the effect of the circularity coefficient of the HWD noise on the symmetric rate 
for 
$P=20$.  
As can be observed, 
the benefits of employing IGS increase with the circularity coefficient of the HWD noise, and there is a considerable performance improvement by IGS in maximally improper HWD noise. 
We emphasize that PGS is suboptimal, even in interference-free communications, under asymmetric HWD. 
Our proposed IGS design with FP outperforms the other algorithms, especially in highly asymmetric HWD noise.
When the variance of the HWD noise is small, the gain of employing IGS is larger. 
The other interesting result in this figure is that our simplified algorithm performs very similarly to our proposed algorithm based on FP for proper HWD. 
Since the simplified algorithm has less computational cost, it can be employed for proper HWD noise when the variance of the  HWD noise is high, i.e., $\sigma^2_{\eta}\geq 0.5$. 
However,  our proposed algorithm based on FP outperforms the other algorithm in low-power HWD noise and/or highly asymmetric HWD noise.
Note that, since the IGS algorithm in \cite{zeng2013transmit} is proposed for ideal devices and does not consider HWD, it performs worse than the proposed PGS, which considers additive symmetric HWD, from the average symmetric rate point of view, even when the HWD noise is maximally improper.

\begin{figure*}[t!]
    \centering
    \begin{subfigure}[t]{0.5\textwidth}
        \centering
\includegraphics[width=.8\textwidth]{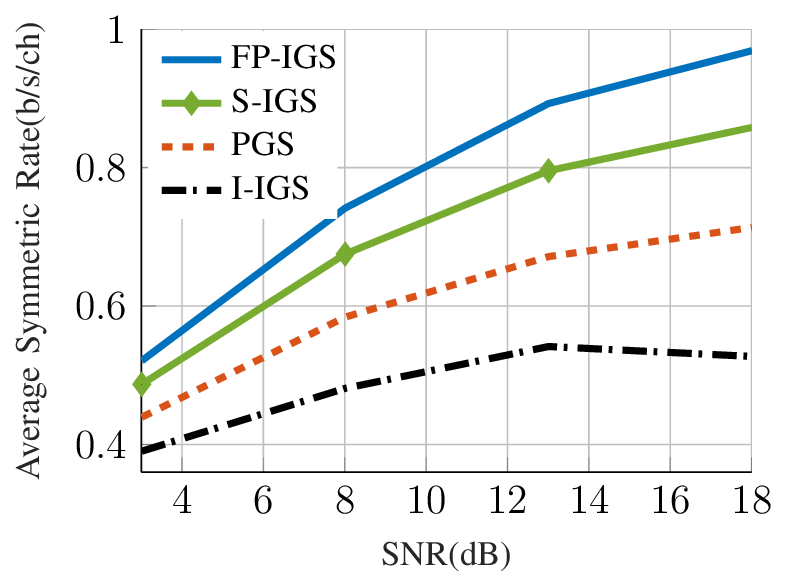}
        \caption{$\sigma_{\eta}^2=0.5$}
    \end{subfigure}%
    ~ 
    \begin{subfigure}[t]{0.5\textwidth}
        \centering
\includegraphics[width=.8\textwidth]{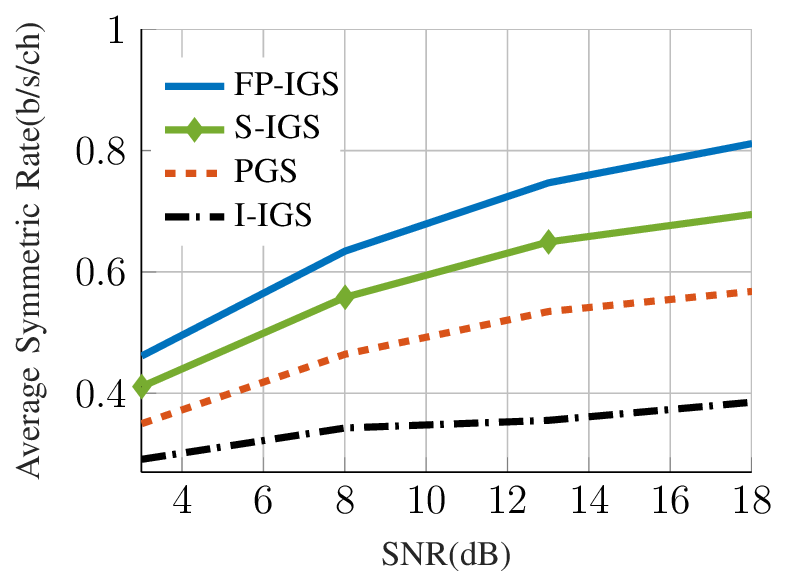}
        \caption{$\sigma_{\eta}^2=1$}
    \end{subfigure}
    \caption{Average symmetric rate 
    versus SNR 
    for $\tilde{\sigma}^2_{\eta}=0.9\sigma^2_{\eta}$.}
	\label{Fig4}
\end{figure*}

In Fig. \ref{Fig4}, we consider the effect of the power budget on the symmetric rate of users. 
There is an almost constant performance gap between our proposed algorithms and the other algorithms. Similar to the other figures, our proposed IGS with FP outperforms our simplified algorithm.

\section{Conclusion}

In this paper, we considered a two-user IC with additive asymmetric HWD at the transceivers. 
Treating interference as noise, we addressed the problem of obtaining the achievable rate region for IGS and proposed two suboptimal algorithms.
The first algorithm, which is based on MM and the generalized Dinkelbach algorithm, obtains a stationary point of the PSINR region. 
In this algorithm, we jointly optimize the powers and complementary variances. 
We also proposed a simplified algorithm that has lower computational complexity. 
This simplified algorithm is based on the separate optimization of the powers and complementary variances.
Through numerical examples, we showed that the proposed approaches enlarge the achievable rate region and outperform PGS and existing IGS algorithms, especially as the HWD becomes more asymmetric.


\appendices
\section{Proof of Lemma \ref{uvfunk-app}}\label{app-new}
In order to approximate $u_k(\mathbf{p},\mathbf{q})$ and $v_k(\mathbf{p},\mathbf{q})$, we employ convex-concave (or concave-convex) procedure (CCP), in which the convex (concave) part is approximated as an affine function by
the first-order approximation of the Taylor expansion.
Note that we take the first-order term and employ an affine approximation since an affine function is the nearest concave approximation to a convex function.
The first-order approximation of a real function $u(\mathbf{x})$ around the point $\mathbf{x}_0$ is obtained through its Taylor expansion as \cite{ schreier2010statistical, brandwood1983complex}
\begin{equation}\label{Taylor}
\Gamma(\mathbf{x})\approx \Gamma(\mathbf{x}_0)+
2\mathfrak{R}\left[(\frac{\partial \Gamma(\mathbf{x})}{\partial \mathbf{x}}\Big|_{\mathbf{x}=\mathbf{x}_0})^T(\mathbf{x}-\mathbf{x}_0)\right],
\end{equation}
where $\mathbf{x}$ is a complex vector.  
In order to apply the CCP to $u_k(\mathbf{p},\mathbf{q})$, we have to differentiate the convex part in \eqref{ufunk} with respect to $\mathbf{p}$, which 
is straightforward since it is a real function on a real domain and consequently, analytic in $\mathbf{p}$. 
The derivative of $(\sigma^2+\mathbf{a}_k^T\mathbf{p})^2$ with respect to $\mathbf{p}$ is 
\begin{align}
\frac{\partial(\sigma^2+\mathbf{a}_k^T\mathbf{p})^2}{\partial\mathbf{p}}&=2\mathbf{a}_k(\sigma^2+\mathbf{a}_k^T\mathbf{p}),
\end{align}
and the resulting first-order approximation around the power vector in the $m$th iteration, $\mathbf{p}^{(m)}$, is given by
\begin{multline}
\label{siktir}
(\sigma^2+\mathbf{a}_k^T\mathbf{p})^2\simeq (\sigma^2+\mathbf{a}_k^T\mathbf{p}^{(m)})^2\\+2(\sigma^2+\mathbf{a}_k^T\mathbf{p}^{(m)})\mathbf{a}_k^T(\mathbf{p}-\mathbf{p}^{(m)}).
\end{multline}
By substituting \eqref{siktir} in \eqref{ufunk}, we can derive $\tilde{u}_k(\mathbf{p},\mathbf{q})$.

 In order to convexify $v_k(\mathbf{p},\mathbf{q})$, we have to differentiate the concave part in \eqref{vfunk} with respect to $\mathbf{p}$ and $\mathbf{q}$.
The derivative of $|\mathbf{g}_k^H\mathbf{q}+\tilde{\mathbf{f}}^H_k\mathbf{p}|^2$ with respect to $\mathbf{p}$ is also straightforward since it is analytic in $\mathbf{p}$:
\begin{equation}
\frac{\partial|\mathbf{g}_k^H\mathbf{q}+\tilde{\mathbf{f}}^H_k\mathbf{p}|^2}{\partial\mathbf{p}}=2\mathfrak{R}\left[\tilde{\mathbf{f}}_k(\mathbf{g}_k^H\mathbf{q}+\tilde{\mathbf{f}}^H_k\mathbf{p})\right].
\end{equation}
The term $|\mathbf{g}_k^H\mathbf{q}+\tilde{\mathbf{f}}^H_k\mathbf{p}|^2$, on the other hand, is not analytic in $\mathbf{q}$ since it is a real-valued function while $\mathbf{q}$ is a complex vector \cite{schreier2010statistical,brandwood1983complex}. 
Thus, we have to employ Wirtinger calculus to obtain the derivative of $|\mathbf{g}_k^H\mathbf{q}+\tilde{\mathbf{f}}^H_k\mathbf{p}|^2$ with respect to $\mathbf{q}$. 
By Wirtinger calculus, we treat $\mathbf{q}$ and $\mathbf{q}^*$ as two independent complex variables \cite{schreier2010statistical,brandwood1983complex}. 
Thus,  we take the derivative of $|\mathbf{g}_k^H\mathbf{q}+\tilde{\mathbf{f}}^H_k\mathbf{p}|^2$ with respect to $\mathbf{q}$ while treating $\mathbf{q}^*$ as a constant, which results in 
\begin{align}
\frac{\partial|\mathbf{g}_k^H\mathbf{q}+\tilde{\mathbf{f}}^H_k\mathbf{p}|^2}{\partial\mathbf{q}}&=\mathbf{g}_k^*(\mathbf{g}_k^H\mathbf{q}+\tilde{\mathbf{f}}^H_k\mathbf{p})^*. 
\end{align}
Now by  \eqref{Taylor}, we can approximate $|\mathbf{g}_k^H\mathbf{q}+\tilde{\mathbf{f}}^H_k\mathbf{p}|^2$ as an affine function as 
\begin{multline}\label{siktir-1}
|\mathbf{g}_k^H\mathbf{q}+\tilde{\mathbf{f}}^H_k\mathbf{p}|^2\simeq |\mathbf{g}_k^H\mathbf{q}^{(m)}+\tilde{\mathbf{f}}^H_k\mathbf{p}^{(m)}|^2\\ +2\mathfrak{R}\left(\tilde{\mathbf{f}}_k^H(\mathbf{g}_k^H\mathbf{q}^{(m)} +\tilde{\mathbf{f}}^H_k\mathbf{p}^{(m)})^*\right)(\mathbf{p}-\mathbf{p}^{(m)})\\
+2\mathfrak{R}\left((\mathbf{g}_k^H\mathbf{q}^{(m)}+\tilde{\mathbf{f}}^H_k\mathbf{p}^{(m)})^*\mathbf{g}_k^H(\mathbf{q}-\mathbf{q}^{(m)})\right).
\end{multline}
By substituting \eqref{siktir-1} in \eqref{vfunk}, we can obtain $\tilde{v}_k(\mathbf{p},\mathbf{q})$.

\section{Proof of Theorem \ref{E-lower}}\label{app:th-1}

\begin{figure*}[t!]
    \centering
    \begin{subfigure}[t]{0.5\textwidth}
        \centering
       \includegraphics[width=0.7\textwidth]{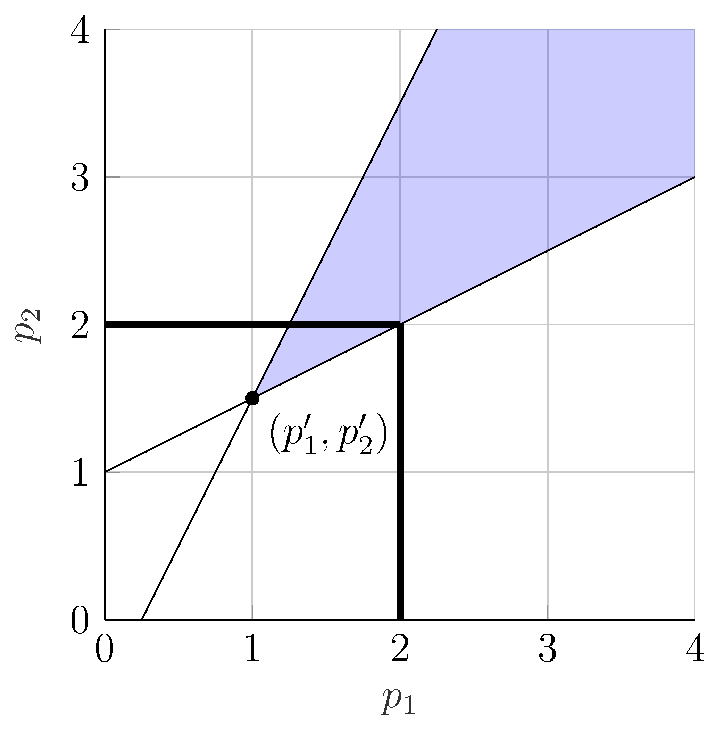}
        \caption{Feasible $E^{\prime}$}
    \end{subfigure}%
    ~ 
    \begin{subfigure}[t]{0.5\textwidth}
        \centering
        \includegraphics[width=0.7\textwidth]{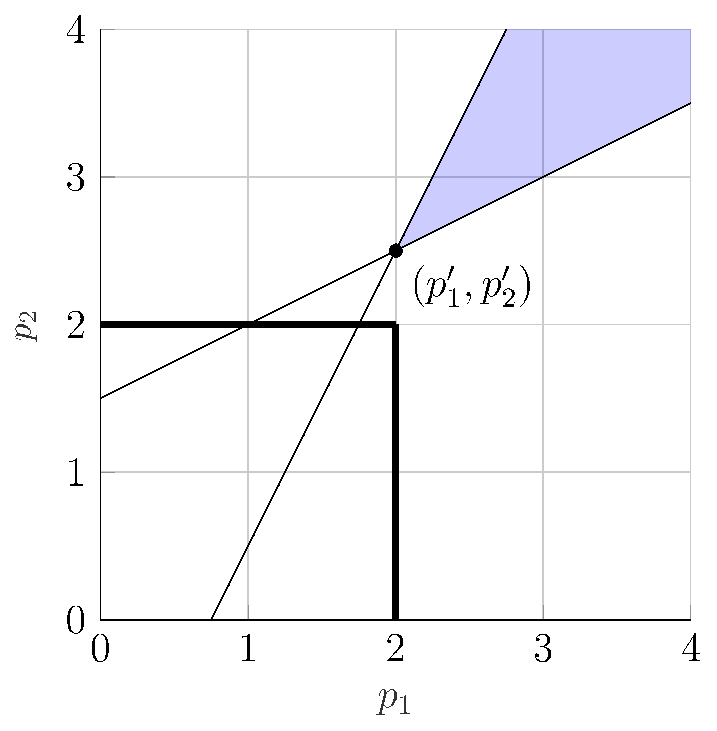}
        \caption{Infeasible $E^{\prime}$}
    \end{subfigure}
    \caption{The constraints of \eqref{enpr-PGS-app-bi} in the power plane.}
	\label{intersec}
\end{figure*}
A given $E^{\prime}$ is feasible if and only if there exists at least a pair $(p_1,p_2)$ that satisfies all the constraints in \eqref{enpr-PGS-app-bi}. 
Let us first  consider the two linear constraints in \eqref{e-Const-40}, which can be written as \eqref{shabes-1} and \eqref{shabes-2}, shown at the top of the next page.
\begin{figure*}
\hrulefill
\begin{align}
|h_{11}|^2\!\left(\!\!1\!-\!\sigma^2_{\eta_{11}}\!(\!\sqrt{1\!+\!\alpha_1 E^{\prime}}\!-\!1\!)\!\right)\!p_1\!-\!|h_{21}|^2(1\!+\!\sigma^2_{\eta_{21}})(\!\sqrt{1\!+\!\alpha_1 E^{\prime}}\!-1\!)p_2\!
\geq\! (\!\sqrt{\!1\!+\!\alpha_1 E^{\prime}}\!-\!1)\sigma^2\!\!,\label{shabes-1}\\
-|h_{12}|^2\!(\!1\!+\!\sigma^2_{\eta_{12}})(\!\sqrt{1\!+\!\alpha_2 E^{\prime}}\!-\!1\!)p_1+|h_{22}|^2\!\left(\!1\!-\!\sigma^2_{\eta_{22}}(\!\sqrt{\!1\!+\!\alpha_2 E^{\prime}}\!-\!1\!)\!\right)\!p_2\!
\geq\! (\!\sqrt{\!1\!+\!\alpha_2 E^{\prime}}\!-\!1\!)\sigma^2\!\!.\label{shabes-2}
\end{align}
\hrulefill
\end{figure*}
We can construct $\mathbf{A}$ in \eqref{line-slop} by the coefficients of $p_1$ and $p_2$ in \eqref{shabes-1} and \eqref{shabes-2}. 
It is worth mentioning that the non-diagonal elements of $\mathbf{A}$ in \eqref{line-slop} are non-positive since $\sqrt{1+\alpha_1 E^{\prime}}\geq 1$ and $\sqrt{1+\alpha_2 E^{\prime}}\geq 1$.  
Thus, if the diagonal elements of $\mathbf{A}$ are not  positive, there is no positive power pair that satisfies \eqref{shabes-1} and \eqref{shabes-2} simultaneously. Hence, in the following, we assume without loss of generality that $\mathbf{A}$ has strictly positive diagonal elements and strictly negative non-diagonal elements.

We can rewrite \eqref{shabes-1} and \eqref{shabes-2}  as
\begin{align}\label{shabes-1-2}
[\mathbf{A}]_{11}p_1&\geq -[\mathbf{A}]_{12}p_2+y_1,\\
\label{shabes-2-2}
[\mathbf{A}]_{22}p_2&\geq -[\mathbf{A}]_{21}p_1+y_2,
\end{align}
where $\mathbf{y}=\left[ \begin{array}{cc}
(\sqrt{1+\alpha_1 E^{\prime}}-1)\sigma^2 & (\sqrt{1+\alpha_2 E^{\prime}}-1)\sigma^2 \end{array} \right]^T$. Moreover, $[\mathbf{A}]_{ij}$, and $y_i$ for $i,j\in\{1,2\}$ are the $ij$th element of $\mathbf{A}$, and the $i$th element of $\mathbf{y}$, respectively. If we decouple the inequalities, we end up with
\begin{align}\label{shabes-1-2-1}
\det (\mathbf{A})p_1&\geq [\mathbf{A}]_{22}y_1-[\mathbf{A}]_{12}y_2,\\
\label{shabes-2-2}
\det (\mathbf{A})p_2&\geq -[\mathbf{A}]_{21}y_1+[\mathbf{A}]_{11}y_2.
\end{align}
The right-hand sides (RHS) in \eqref{shabes-1-2-1} and \eqref{shabes-2-2} are positive for a feasible $E^{\prime}$ as mentioned before.
Note that if $\det (\mathbf{A})< 0$, there are no positive power pairs that satisfy \eqref{shabes-1-2-1} and \eqref{shabes-2-2} for the given structure of $\mathbf{A}$ in \eqref{line-slop}.  
Thus, we consider $\det (\mathbf{A})> 0$, which yields 
\begin{align}\label{shabes-1-2-1-2}
p_1&\geq p_1^{\prime}=\frac{[\mathbf{A}]_{22}y_1-[\mathbf{A}]_{12}y_2}{\det (\mathbf{A})},\\
\label{shabes-2-2-3}
p_2&\geq p_2^{\prime}=\frac{-[\mathbf{A}]_{21}y_1+[\mathbf{A}]_{11}y_2}{\det (\mathbf{A})},
\end{align}
or equivalently $\mathbf{p}=[\begin{array}{cc}p_1 &p_2\end{array}]^T\geq \mathbf{A}^{-1}\mathbf{y}$, where $p_1^{\prime}$ and $p_2^{\prime}$ are the intersecting point given in \eqref{int-sec-pnt}.
Hence, the intersecting point provides the minimum positive power pairs that satisfy \eqref{shabes-1} and \eqref{shabes-2}. 
If $p_1^{\prime}$ and $p_2^{\prime}$ satisfy the power constraint, $E^{\prime}$ is feasible (Fig. \ref{intersec}.a). Otherwise, $E^{\prime}$ is infeasible (Fig. \ref{intersec}.b).

\bibliographystyle{IEEEtran}
\bibliography{ref2}

\begin{thebibliography}{10}
\providecommand{\url}[1]{#1}
\csname url@samestyle\endcsname
\providecommand{\newblock}{\relax}
\providecommand{\bibinfo}[2]{#2}
\providecommand{\BIBentrySTDinterwordspacing}{\spaceskip=0pt\relax}
\providecommand{\BIBentryALTinterwordstretchfactor}{4}
\providecommand{\BIBentryALTinterwordspacing}{\spaceskip=\fontdimen2\font plus
\BIBentryALTinterwordstretchfactor\fontdimen3\font minus
  \fontdimen4\font\relax}
\providecommand{\BIBforeignlanguage}[2]{{%
\expandafter\ifx\csname l@#1\endcsname\relax
\typeout{** WARNING: IEEEtran.bst: No hyphenation pattern has been}%
\typeout{** loaded for the language `#1'. Using the pattern for}%
\typeout{** the default language instead.}%
\else
\language=\csname l@#1\endcsname
\fi
#2}}
\providecommand{\BIBdecl}{\relax}
\BIBdecl

\bibitem{andrews2014will}
J.~G. Andrews, S.~Buzzi, W.~Choi, S.~V. Hanly, A.~Lozano, A.~C. Soong, and
  J.~C. Zhang, ``What will 5{G} be?'' \emph{IEEE J. Sel. Areas Commun.},
  vol.~32, no.~6, pp. 1065--1082, 2014.

\bibitem{buzzi2016survey}
S.~Buzzi, I.~Chih-Lin, T.~E. Klein, H.~V. Poor, C.~Yang, and A.~Zappone, ``A
  survey of energy-efficient techniques for 5{G} networks and challenges
  ahead,'' \emph{IEEE J. Sel. Areas Commun.}, vol.~34, no.~4, pp. 697--709,
  2016.

\bibitem{heath2016overview}
R.~W. Heath, N.~Gonzalez-Prelcic, S.~Rangan, W.~Roh, and A.~M. Sayeed, ``An
  overview of signal processing techniques for millimeter wave {MIMO}
  systems,'' \emph{IEEE J. Sel. Topics Signal Process.}, vol.~10, no.~3, pp.
  436--453, 2016.

\bibitem{javed2017asymmetric}
S.~Javed, O.~Amin, S.~S. Ikki, and M.-S. Alouini, ``Asymmetric hardware
  distortions in receive diversity systems: Outage performance analysis,''
  \emph{IEEE Access}, vol.~5, pp. 4492--4504, 2017.

\bibitem{zhu2017analysis}
J.~Zhu, D.~W.~K. Ng, N.~Wang, R.~Schober, and V.~K. Bhargava, ``Analysis and
  design of secure massive {MIMO} systems in the presence of hardware
  impairments,'' \emph{IEEE Trans. Wireless Commun.}, vol.~16, no.~3, pp.
  2001--2016, 2017.

\bibitem{zhang2016achievable}
J.~Zhang, L.~Dai, X.~Zhang, E.~Bj{\"o}rnson, and Z.~Wang, ``Achievable rate of
  {R}ician large-scale {MIMO} channels with transceiver hardware impairments,''
  \emph{IEEE Trans. Veh. Technol}, vol.~65, no.~10, pp. 8800--8806, 2016.

\bibitem{xia2015hardware}
X.~Xia, D.~Zhang, K.~Xu, W.~Ma, and Y.~Xu, ``Hardware impairments aware
  transceiver for full-duplex massive {MIMO} relaying,'' \emph{IEEE Trans.
  Signal Process.}, vol.~63, no.~24, pp. 6565--6580, 2015.

\bibitem{bjornson2014massive}
E.~Bj{\"o}rnson, J.~Hoydis, M.~Kountouris, and M.~Debbah, ``Massive {MIMO}
  systems with non-ideal hardware: Energy efficiency, estimation, and capacity
  limits,'' \emph{IEEE Trans. Inf. Theory}, vol.~60, no.~11, pp. 7112--7139,
  2014.

\bibitem{bjornson2013capacity}
E.~Bjornson, P.~Zetterberg, M.~Bengtsson, and B.~Ottersten, ``Capacity limits
  and multiplexing gains of {MIMO} channels with transceiver impairments,''
  \emph{IEEE Commun. Lett.}, vol.~17, no.~1, pp. 91--94, 2013.

\bibitem{bjornson2013new}
E.~Bjornson, M.~Matthaiou, and M.~Debbah, ``A new look at dual-hop relaying:
  Performance limits with hardware impairments,'' \emph{IEEE Trans. Commun.},
  vol.~61, no.~11, pp. 4512--4525, 2013.

\bibitem{javed2017full}
S.~Javed, O.~Amin, and M.-S. Alouini, ``Full-duplex relaying under {I/Q}
  imbalance using improper {G}aussian signaling,'' in \emph{Proc. IEEE Int.
  Conf. on Acoust., Speech and Signal Processing (ICASSP)}, 2017, pp.
  6538--6542.

\bibitem{cadambe2010interference}
V.~R. Cadambe, S.~A. Jafar, and C.~Wang, ``Interference alignment with
  asymmetric complex signaling--{S}ettling the {H}{\o}st-{M}adsen-{N}osratinia
  conjecture,'' \emph{IEEE Trans. Inf. Theory}, vol.~56, no.~9, pp. 4552--4565,
  2010.

\bibitem{yang2014interference}
L.~Yang and W.~Zhang, ``Interference alignment with asymmetric complex
  signaling on {MIMO} {X} channels,'' \emph{IEEE Trans. Commun.}, vol.~62,
  no.~10, pp. 3560--3570, 2014.

\bibitem{ho2012improper}
Z.~K. Ho and E.~Jorswieck, ``Improper {G}aussian signaling on the two-user
  {SISO} interference channel,'' \emph{IEEE Trans. Wireless Commun.}, vol.~11,
  no.~9, pp. 3194--3203, 2012.

\bibitem{zeng2013transmit}
Y.~Zeng, C.~M. Yetis, E.~Gunawan, Y.~L. Guan, and R.~Zhang, ``Transmit
  optimization with improper {G}aussian signaling for interference channels,''
  \emph{IEEE Trans. Signal Process.}, vol.~61, no.~11, pp. 2899--2913, 2013.

\bibitem{nguyen2015improper}
H.~D. Nguyen, R.~Zhang, and S.~Sun, ``Improper signaling for symbol error rate
  minimization in {$ K $}-user interference channel,'' \emph{IEEE Trans.
  Commun.}, vol.~63, no.~3, pp. 857--869, 2015.

\bibitem{lagen2016superiority}
S.~Lagen, A.~Agustin, and J.~Vidal, ``On the superiority of improper {G}aussian
  signaling in wireless interference {MIMO} scenarios,'' \emph{IEEE Trans.
  Commun.}, vol.~64, no.~8, pp. 3350--3368, 2016.

\bibitem{kurniawan2015improper}
E.~Kurniawan and S.~Sun, ``Improper {G}aussian signaling scheme for the
  {Z}-interference channel,'' \emph{IEEE Trans. Wireless Commun.}, vol.~14,
  no.~7, pp. 3912--3923, 2015.

\bibitem{lameiro2017rate}
C.~Lameiro, I.~Santamar{\'\i}a, and P.~J. Schreier, ``Rate region boundary of
  the {SISO Z}-interference channel with improper signaling,'' \emph{IEEE
  Trans. Commun.}, vol.~65, no.~3, pp. 1022--1034, 2017.

\bibitem{lameiro2015benefits}
------, ``Benefits of improper signaling for underlay cognitive radio,''
  \emph{IEEE Wireless Commun. Lett.}, vol.~4, no.~1, pp. 22--25, 2015.

\bibitem{amin2017overlay}
O.~Amin, W.~Abediseid, and M.-S. Alouini, ``Overlay spectrum sharing using
  improper {G}aussian signaling,'' \emph{IEEE J. Sel. Areas Commun.}, vol.~35,
  no.~1, pp. 50--62, 2017.

\bibitem{soleymani2019energy}
M.~Soleymani, C.~Lameiro, P.~J. Schreier, and I.~Santamaria, ``Energy-efficient
  design for underlay cognitive radio using improper signaling,'' in
  \emph{Proc. IEEE Int. Conf. on Acoust., Speech and Signal Processing
  (ICASSP)}, 2019, pp. 4769--4773.

\bibitem{soleymani2019robust}
M.~Soleymani, C.~Lameiro, I.~Santamaria, and P.~J. Schreier, ``Robust improper
  signaling for two-user {SISO} interference channels,'' \emph{IEEE Trans.
  Commun.}, vol.~67, no.~7, pp. 4709--4723, 2019.

\bibitem{Sole1909:Energy}
------, ``{Energy-Efficient} improper signaling for {$K$-User} interference
  channels,'' in \emph{Proc. IEEE European Signal Processing Conference
  (EUSIPCO)}, 2019.

\bibitem{lagen2016coexisting}
S.~Lagen, A.~Agustin, and J.~Vidal, ``Coexisting linear and widely linear
  transceivers in the {MIMO} interference channel,'' \emph{IEEE Trans. Signal
  Process.}, vol.~64, no.~3, pp. 652--664, 2016.

\bibitem{gaafar2017underlay}
M.~Gaafar, O.~Amin, W.~Abediseid, and M.-S. Alouini, ``Underlay spectrum
  sharing techniques with in-band full-duplex systems using improper {G}aussian
  signaling,'' \emph{IEEE Trans. Wireless Commun.}, vol.~16, no.~1, pp.
  235--249, 2017.

\bibitem{schreier2010statistical}
P.~J. Schreier and L.~L. Scharf, \emph{Statistical {S}ignal {P}rocessing of
  {C}omplex-{V}alued {D}ata: the {T}heory of {I}mproper and {N}oncircular
  {S}ignals}.\hskip 1em plus 0.5em minus 0.4em\relax Cambridge University
  Press, 2010.

\bibitem{adali2011complex}
T.~Adali, P.~J. Schreier, and L.~L. Scharf, ``Complex-valued signal processing:
  {T}he proper way to deal with impropriety,'' \emph{IEEE Trans. Signal
  Process.}, vol.~59, no.~11, pp. 5101--5125, 2011.

\bibitem{cover2012elements}
T.~M. Cover and J.~A. Thomas, \emph{Elements of {I}nformation {T}heory}.\hskip
  1em plus 0.5em minus 0.4em\relax John Wiley \& Sons, 2012.

\bibitem{javed2018improper}
S.~Javed, O.~Amin, B.~Shihada, and M.-S. Alouini, ``Improper {G}aussian
  signaling for hardware impaired multihop full-duplex relaying systems,''
  \emph{IEEE Trans. Commun.}, vol.~67, no.~3, pp. 1858--1871, 2019.

\bibitem{javed2017impact}
S.~Javed, O.~Amin, S.~S. Ikki, and M.-S. Alouini, ``Impact of improper
  {G}aussian signaling on hardware impaired systems,'' in \emph{Proc. IEEE Int.
  Conf. Commun. (ICC)}, 2017, pp. 1--6.

\bibitem{boulogeorgos2016energy}
A.-A.~A. Boulogeorgos, N.~D. Chatzidiamantis, and G.~K. Karagiannidis, ``Energy
  detection spectrum sensing under {RF} imperfections,'' \emph{IEEE Trans.
  Wireless Commun.}, vol.~64, no.~7, pp. 2754--2766, 2016.

\bibitem{sun2017majorization}
Y.~Sun, P.~Babu, and D.~P. Palomar, ``Majorization-minimization algorithms in
  signal processing, communications, and machine learning,'' \emph{IEEE Trans.
  Signal Process.}, vol.~65, no.~3, pp. 794--816, 2017.

\bibitem{crouzeix1991algorithms}
J.-P. Crouzeix and J.~A. Ferland, ``Algorithms for generalized fractional
  programming,'' \emph{Math. Programm.}, vol.~52, no. 1-3, pp. 191--207, May
  1991.

\bibitem{zappone2015energy}
A.~Zappone and E.~Jorswieck, ``Energy efficiency in wireless networks via
  fractional programming theory,'' \emph{Found Trends{$^{\textregistered}$} in
  Commun. Inf. Theory}, vol.~11, no. 3-4, pp. 185--396, 2015.

\bibitem{dinkelbach1967nonlinear}
W.~Dinkelbach, ``On nonlinear fractional programming,'' \emph{Management
  Science}, vol.~13, no.~7, pp. 492--498, 1967.

\bibitem{shen2018fractional}
K.~Shen and W.~Yu, ``Fractional programming for communication systems--{P}art
  {I}: {P}ower control and beamforming,'' \emph{IEEE Trans. Signal Process.},
  vol.~66, no.~10, pp. 2616--2630, 2018.

\bibitem{lanckriet2009convergence}
G.~R. Lanckriet and B.~K. Sriperumbudur, ``On the convergence of the
  concave-convex procedure,'' in \emph{Proc. Adv. Neural Inf. Process. Syst},
  2009, pp. 1759--1767.

\bibitem{aubry2018new}
A.~Aubry, A.~De~Maio, A.~Zappone, M.~Razaviyayn, and Z.-Q. Luo, ``A new
  sequential optimization procedure and its applications to resource allocation
  for wireless systems,'' \emph{IEEE Trans. Signal Process.}, vol.~66, no.~24,
  pp. 6518--6533, 2018.

\bibitem{schenk2008rf}
T.~Schenk, \emph{{RF} imperfections in high-rate wireless systems: impact and
  digital compensation}.\hskip 1em plus 0.5em minus 0.4em\relax Springer
  Science \& Business Media, 2008.

\bibitem{javed2019multiple}
S.~Javed, O.~Amin, S.~S. Ikki, and M.-S. Alouini, ``Multiple antenna systems
  with hardware impairments: New performance limits,'' \emph{IEEE Trans. Veh.
  Technol.}, vol.~68, no.~2, pp. 1593--1606, 2019.

\bibitem{javed2019asymmetric}
------, ``Asymmetric modulation for hardware impaired systems-error probability
  analysis and receiver design,'' \emph{IEEE Trans. Wireless Commun.}, vol.~18,
  no.~3, pp. 1723--1738, 2019.

\bibitem{santamaria2018information}
I.~Santamaria, P.~Crespo, C.~Lameiro, and P.~Schreier, ``Information-theoretic
  analysis of a family of improper discrete constellations,'' \emph{Entropy},
  vol.~20, no.~1, p.~45, 2018.

\bibitem{jorswieck2008complete}
E.~A. Jorswieck, E.~G. Larsson, and D.~Danev, ``Complete characterization of
  the {P}areto boundary for the {MISO} interference channel,'' \emph{IEEE
  Trans. Signal Process.}, vol.~56, no.~10, pp. 5292--5296, 2008.

\bibitem{boyd2004convex}
S.~Boyd and L.~Vandenberghe, \emph{Convex {O}ptimization}.\hskip 1em plus 0.5em
  minus 0.4em\relax Cambridge University Press, 2004.

\bibitem{yang2017unified}
Y.~Yang and M.~Pesavento, ``A unified successive pseudoconvex approximation
  framework,'' \emph{IEEE Trans. Signal Process.}, vol.~65, no.~13, pp.
  3313--3328, 2017.

\bibitem{lameiro2013amplify}
C.~Lameiro, J.~V{\'\i}a, and I.~Santamar{\'\i}a, ``Amplify-and-forward
  strategies in the two-way relay channel with analog {T}x-{R}x beamforming,''
  \emph{IEEE Trans. Veh. Technol.}, vol.~62, no.~2, pp. 642--654, 2013.

\bibitem{santamaria2010capacity}
I.~Santamar{\'\i}a, J.~V{\'\i}a, A.~Naz{\'a}bal, and C.~Lameiro, ``Capacity
  region of the multiantenna {G}aussian broadcast channel with analog {T}x-{R}x
  beamforming,'' in \emph{Proc. 5th Int. ICST CHINACOM}, Aug. 2010, pp. 1--6.

\bibitem{hellings2017worst}
C.~Hellings and W.~Utschick, ``On the worst-case noise in {G}aussian {MIMO}
  systems with proper and with improper signaling,'' in \emph{WSA 2017; 21th
  International ITG Workshop on Smart Antennas}.\hskip 1em plus 0.5em minus
  0.4em\relax VDE, 2017, pp. 1--7.

\bibitem{rubinov2001alogrithm}
A.~Rubinov, H.~Tuy, and H.~Mays, ``An alogrithm for monotonic global
  optimization problems,'' \emph{Optimization}, vol.~49, no.~3, pp. 205--221,
  2001.

\bibitem{liu2012achieving}
L.~Liu, R.~Zhang, and K.-C. Chua, ``Achieving global optimality for weighted
  sum-rate maximization in the $k$-user {G}aussian interference channel with
  multiple antennas,'' \emph{IEEE Trans. Wireless Commun.}, vol.~11, no.~5, pp.
  1933--1945, 2012.

\bibitem{zappone2017globally}
A.~Zappone, E.~Bj{\"o}rnson, L.~Sanguinetti, and E.~Jorswieck, ``Globally
  optimal energy-efficient power control and receiver design in wireless
  networks,'' \emph{IEEE Trans. Signal Process.}, vol.~65, no.~11, pp.
  2844--2859, 2017.

\bibitem{duchi2018sequential}
J.~Duchi, ``Sequential convex programming,'' \emph{Lecture Notes, EE364b,
  Stanford University}, 2018.

\bibitem{lipp2016variations}
T.~Lipp and S.~Boyd, ``Variations and extension of the convex--concave
  procedure,'' \emph{Optim. Eng.}, vol.~17, no.~2, pp. 263--287, 2016.

\bibitem{yuille2003concave}
A.~L. Yuille and A.~Rangarajan, ``The concave-convex procedure,'' \emph{Neural
  computation}, vol.~15, no.~4, pp. 915--936, 2003.

\bibitem{hellings2018improper}
C.~Hellings and W.~Utschick, ``Improper signaling versus time-sharing in the
  two-user {G}aussian interference channel with {TIN},'' \emph{arXiv preprint
  arXiv:1808.01611}, 2018.

\bibitem{brandwood1983complex}
D.~Brandwood, ``A complex gradient operator and its application in adaptive
  array theory,'' in \emph{IEE Proc. F Commun., Radar Signal Process.}, vol.
  130, no.~1.\hskip 1em plus 0.5em minus 0.4em\relax IET, 1983, pp. 11--16.

\end{thebibliography}
\end{document}